\newcommand{\cH}{\mathcal{H}}
\newcommand{\rr}{\mathbb{R}}
\newcommand{\ee}{\mathbb{E}}
\newcommand{\nn}{\mathbb{N}}
\newcommand{\pp}{\mathbb{P}}
\newcommand{\ov}{\operatornamewithlimits}
\newcommand{\ds}{\displaystyle}
\def\BS{Black \& Scholes }
\newcommand{\blanc}[1]{\vspace{#1\baselineskip}}
\newtheorem{nt_theorem}{Theorem}
\newenvironment{theorem}{\blanc{1.5}\begin{nt_theorem}---}{\end{nt_theorem}\blanc{1.5}}
\newtheorem{nt_proposition}[nt_theorem]{Proposition}
\newtheorem{nt_corollaire}[nt_theorem]{Corollary}
\newenvironment{Corollary}{\blanc{1.5}\begin{nt_corollaire}---}{\end{nt_corollaire}\blanc{1.5}}
\newtheorem{nt_definition}[nt_theorem]{Definition}
\newtheorem{nt_lemma}[nt_theorem]{Lemma}
\newenvironment{lemma}{\blanc{0.5}\begin{nt_lemma}---}{\end{nt_lemma}\blanc{0.5}}
\newtheorem{nt_jeulemma}[nt_theorem]{Jeulin's Lemma}
\newtheorem{nt_conjecture}[nt_theorem]{Conjecture}
\newtheorem{nt_remark}[nt_theorem]{Remark}
\newenvironment{proof}{{\textit{Proof : }}}{\hfill$\Box$
\\\bigskip}
\newenvironment{proofOFT1}{{\textit{Proof of Theorem 1 : }}}{\hfill$\Box$
\\\bigskip}
\newenvironment{proofOFT2}{{\textit{Proof of Theorem 2 : }}}{\hfill$\Box$
\\\bigskip}
\newcounter{hypo}
\newcommand*{\dohypo}{\textbf{(${\mathcal H}$\thehypo)}}
\newenvironment{hypo}[1][]{%
  
  \refstepcounter{hypo}
  \list{}{%
    \settowidth{\labelwidth}{\dohypo}%
    \setlength{\labelsep}{10pt}%
    \setlength{\leftmargin}{\labelwidth}
    \advance\leftmargin\labelsep%
  }%
\item[\dohypo  #1]%
}{%
  \endlist
}
\def\hypref#1{\hyperref[hyp:#1]{(${\mathcal H}$\ref*{hyp:#1})}}
\def\hypreff#1#2{\hyperref[hyp:#2]{(${\mathcal H}$\ref*{hyp:#1}-{\it
\ref*{hyp:#2})}}}
\begin{document}

\vspace{5mm}

\begin{center}
\textbf{\Large{Coupling Index and stocks}}\\
\end{center}

$\,$

\begin{center}
\emph{Benjamin Jourdain}\footnote{Université Paris-Est, CERMICS,
Projet MathFi ENPC-INRIA-UMLV. This research benefited from the
support of the "Chair Risques Financiers", Fondation du Risque.

Postal address : 6-8 av. Blaise Pascal, Cité Descartes,
Champs-sur-Marne, 77455 Marne-la-Vallée Cedex 2.

E-mails : \underline{jourdain@cermics.enpc.fr} and
\underline{sbai@cermics.enpc.fr} } and \emph{Mohamed Sbai$\,^1$}
\end{center}

\vspace{20mm}

\begin{abstract}
In this paper, we are interested in continuous time models in
which the index level induces some feedback on the dynamics of its
composing stocks. More precisely, we propose a model in which the
log-returns of each stock may be decomposed into a systemic part
proportional to the log-returns of the index plus an idiosyncratic
part. We show that, when the number of stocks in the index is large,
this model may be approximated by a local volatility model for the
index and a stochastic volatility model for each stock with
volatility driven by the index. This result is useful in a calibration
perspective : it suggests that one should first calibrate the local
volatility of the index and then calibrate the dynamics of each stock. We explain how
to do so in the limiting simplified model and in the original model.
\end{abstract}

 \vspace{20mm}

\section*{Introduction}
{F}rom the early eighties, when trading on stock index was
introduced, quantitative finance faced the problem of efficiently
pricing and hedging index options along with their underlying
components. Many advances have been made for single stock modeling
and a variety of solutions to escape from the very restrictive \BS
model has been deeply investigated (such as local volatility models,
models with jumps or stochastic volatility models). However, when
the number of underlyings is large, index option pricing, or more
generally basket option pricing, remains a challenge unless one
simply assumes constantly correlated dynamics for the stocks. The
problem then is the impossibility of fitting both the stocks and the
index smiles.

We try to address this issue by making the dynamics of the stocks
depend on the index. The natural fact that the volatility of the
index is related to the volatilities of its underlying components
has already been accounted for in the works of Avellaneda \textit{et
al.}~\cite{Avellaneda} and Lee \textit{et al.}~\cite{Lee}. In the
first paper, the authors use a large deviation asymptotics valid for
small values of the product of the maturity by the square of
the volatility to
reconstruct the local volatility of the index from the local
volatilities of the stocks. They express this dependence in terms of
the implied volatilities using the results of Berestecky \textit{et
al.}(\cite{Berestycki1},\cite{Berestycki2}). In the second paper,
the authors reconstruct the Gram-Charlier expansion of the
probability density of the index from the stocks using a
moments-matching technique. Both papers consider local volatility
models for the stocks and a constant correlation matrix but the
generalization to stochastic volatility models or to varying
correlation coefficients is not straightforward.

Another point of view is to say that the volatility of a composing
stock should be related to the index level, or say to the volatility
of the index, in some way. This is not astonishing since the index
represents the move of the market and reflects the view of the
investors on the state of the economy. Moreover, it is consistent with
equilibrium economic models like CAPM. Following this idea, we
propose a new modeling framework in which the volatility of the
index and the volatilities of the stocks are related. We show that,
when the number of underlying stocks tends to infinity, our model
reduces to a local volatility model for the index and to a
stochastic volatility model for the stocks where the stochastic
volatility depends on the index level. This asymptotics is
reasonable since the number of stocks composing an index is usually large.
As a consequence, the correlation matrix between the stocks in our
model is not constant but stochastic and we show that it is consistent
with empirical studies. Finally, we address calibration issues and
we show that it is possible, within our framework, to fit both index
and stocks smiles. The method we introduce is based on the
simulation of SDEs nonlinear in the sense of McKean, and
non-parametric estimation of conditional expectations.

This paper is organized as follows. In Section 1, we specify our
model for the index and its composing stocks and in Section 2 we
study the limiting model when the number of underlying stocks goes
to infinity. Section 3 is devoted to calibration issues. Numerical
results are presented in Section 4 and the conclusion is given in
Section 5.

\vspace{2mm}

$\,$\\
\underline{Acknowledgements:} We thank Lorenzo Bergomi, Julien Guyon
and all the equity quantitative research team of Societe Generale
CIB for numerous fruitful discussions and for providing us with the
market data.

\vspace{5mm}

$\,$\\

\section{Model Specification}

An index is a collection of stocks that reflects the performance of
a whole stock market or a specific sector of a market. It is valued
as a weighted sum of the value of its underlying components. More
precisely, if $I^M_t$ stands for the value at time $t$ of an index
composed of $M$ underlyings, then
\begin{equation}
I^M_t=\sum_{j=1}^M w_j S^{j,M}_t,
\end{equation}
where $S^{j,M}_t$ is the value of the stock $j$ at time $t$ and the
weightings $(w_j)_{j=1\dots M}$ are given constants\footnote{The
weightings are periodically updated but, as usually assumed, we
suppose that, up to maturities of the options considered, they are
constant. When updated, they are often chosen proportional to the
market capitalizations of the stocks.}.

Unless otherwise stated, we always work under a risk-neutral
probability measure. In order to account for the influence of the
index on its underlying components, we specify the following
stochastic differential equations for the stocks
\begin{equation}
\forall j \in \{1,\dots,M\}, \quad
\frac{dS^{j,M}_t}{S^{j,M}_t}=(r-\delta_j) dt + \beta_j
\,\sigma(t,I^{M}_t) dB_t + \eta_j(t,S^{j,M}_t) dW^j_t,\;S^{j,M}_0=s^j_0
\end{equation}
where
\begin{itemize}
\item $r$ is the short interest rate,
\item $s^j_0$ is the initial value of the stock $j$,
\item $\delta_j \in [0,\infty[$ is the continuous dividend rate of the stock $j$,
\item $\beta_j$ is the usual beta coefficient of the stock $j$ that
quantifies the sensitivity of the stock returns to the index returns
(see the seminal paper of Sharpe~\cite{sharpe}). It is defined as
$\frac{Cov(r_{j},r_I)}{Var(r_I)}$ where $r_{j}$ (respectively
$r_{I}$) is the rate of return of the stock $j$ (respectively of the
index),
\item $(B_t)_{t\in[0,T]},(W^1_t)_{t\in[0,T]},\dots,
 (W^M_t)_{t\in[0,T]}$ are independent Brownian motions,
\item the functions $\sigma,\eta_1,\dots,\eta_M:[0,T]\times\rr\to\rr$ satisfy the
usual Lipschitz and growth assumptions that ensure existence and
strong uniqueness of the solutions (see for example Theorem 5.2.9 of
\cite{KaratzasShreve}) :
\begin{hypo}
\label{hyp:LIP} $\exists K$ such that $\forall (t,s_1,s_2) \in
[0,T]\times \rr^M\times \rr^M,$
\[
\begin{array}{l}
\ds \sum_{j=1}^M \left|s^j_1 \sigma\left(t,\sum_{k=1}^M w_k
s^{k}_1\right)\right| + \left|s^j_1
\eta_j(t,s^j_1)\right| \leq K\left(1+|s_1|\right)\\[3mm]
\ds \sum_{j=1}^M \left|s^j_1 \sigma\left(t,\sum_{k=1}^M w_k
s^{k}_1\right)-s^j_2 \sigma\left(t,\sum_{k=1}^M w_k
s^{k}_2\right)\right| \leq K|s_1-s_2|\\[3mm]
\ds \sum_{j=1}^M \left|s^j_1 \eta_j(t,s^j_1)-s^j_2
\eta_j(t,s^j_2)\right|\leq K|s_1-s_2|.
\end{array}
\]
\end{hypo}
 \end{itemize}

As a consequence, the index satisfies the following stochastic
differential equation :
\begin{equation}
dI^M_t=r I^M_t dt -\left(\sum_{j=1}^M \delta_j w_j S^{j,M}_t\right)
dt+ \left(\sum_{j=1}^M \beta_j w_j S^{j,M}_t\right)\sigma(t,I^{M}_t)
dB_t +\sum_{j=1}^M w_j S^{j,M}_t \eta_j(t,S^{j,M}_t) dW^j_t
\label{indexSDE}
\end{equation}

Before going any further, let us make some preliminary remarks on
this framework.

\begin{itemize}
\item[-] We have $M$ coupled stochastic differential equations.
The dynamics of a given stock depends on all the other stocks
composing the index through the volatility term $\sigma(t,I^M_t)$. Since
there are $M$ linearly independent assets and $M+1$ driving Brownian
motions, the market is incomplete.
\item [-] Accounting for the dividends is not relevant for all types
of indices. Indeed, for many performance-based indices (such as the
German DAX index) dividends and other events are rolled into the
final value of the index.
\item[-] The cross-correlations between stocks are not constant but
stochastic :
\[\rho_{ij}(t)=\frac{\beta_i \beta_j \sigma^2(t,I^{M}_t)}{\ds \sqrt{\beta_i^2\sigma^2(t,I^{M}_t)+\eta_i^2(t,S^{i,M}_t)}\,
\sqrt{\beta_j^2\sigma^2(t,I^{M}_t)+\eta_j^2(t,S^{j,M}_t)}}\]
Note that they depend not only on the stocks but also on the index.
More importantly, it is commonly observed that the more the market
is volatile, the more the stocks tend to be highly correlated. This
feature is reproduced here as we can easily check that an increase
in the index volatility, with everything else left unchanged,
produces an increase in the cross-correlations.

In a recent paper, Cizeau \textit{et al.}~\cite{Bouchaud1} show that
it is possible to capture the essential features of stocks
cross-correlations, in particular in extreme market conditions, by a
simple non-Gaussian one factor model. The authors successfully
compare different empirical measures of correlation with the
prediction of the following model :
\begin{equation}
r_j(t)=\beta_j r_I(t)+\epsilon_j(t)\label{bouchaud}
\end{equation}

where $r_j(t)=\frac{S^j_t}{S^j_{t-1}}-1$ is the daily return of
stock $j$, $r_I(t)$ is the daily return of the market and the
residuals $\epsilon_j(t)$ are independent random variables following
a fat-tailed distribution\footnote{The authors have chosen a Student
distribution in their numerical experiments.}.

Our model is in line with~(\ref{bouchaud}). Indeed, since the beta
coefficients are usually narrowly distributed around 1, the factor
$\sum_{j=1}^M \beta_j w_j S^{j,M}_t$  of $\sigma(t,I^M_t)$ in
(\ref{indexSDE}) is close to $I^M_t$. Moreover,  since $$\ee\left(\left(\int_0^T\sum_{j=1}^M w_j
    S^{j,M}_t \eta_j(t,S^{j,M}_t) dW^j_t\right)^2\right)\leq \sum_{j=1}^M
w_j^2\sup_{1\leq j\leq M}\int_0^T\ee\left((S^{j,M}_t
  \eta_j(t,S^{j,M}_t))^2\right)dt\sim \sum_{j=1}^M
w_j^2 T,$$ one can
neglect the term $\sum_{j=1}^M w_j S^{j,M}_t \eta_j(t,S^{j,M}_t) dW^j_t$
in the dynamics of the index when $\sum_{j=1}^M
w_j^2$ is small. Of course, this approximation worsens when the maturity
$T$ increases. The latter condition is
satisfied when $M$ is
large and the weighting vector
$(w_1,\hdots,w_M)$ is close to the vector
$(\frac{1}{M},\hdots,\frac{1}{M})$ with constant coefficients for which $\sum_{j=1}^M\frac{1}{M^2}=\frac{1}{M}$. Then, if we denote by $r_j$ the
log-return of the stock $j$ and by $r_{I^M}$ the log-return of the
index, both on a daily basis, we will have
\[r_j=\beta_j r_{I^M}+\eta_j \Delta W^j + \text{drift},\]
where $\Delta W^j$ is an independent Gaussian noise. Consequently,
in our model too, the return of a stock is decomposed into a
systemic part driven by the index, which represents the market, and
a residual part.
\end{itemize}

\section{Asymptotics for a large number of underlying stocks}
The number of underlying components of an index is usually
large\footnote{500 stocks for the S\&P 500 index, 100 stocks for the
FTSE 100 index, 40 stocks for the CAC40 index, etc.}. As discussed in
the previous section, when $\sum_{j=1}^M w_j^2$ is
small, one can neglect the term $\sum_{j=1}^M w_j S^{j,M}_t
\eta_j(t,S^{j,M}_t) dW^j_t$ in \eqref{indexSDE} and derive a simplified
approximate dynamics for the index.
The aim of this section is to quantify the error we commit by doing
so.

To be specific, consider the limit candidate $(I_t)_{t\in[0,T]}$
solution of the following SDE :
\begin{equation}
\left\{\begin{array}{rcl} dI_t&=&\ds (r-\delta) I_t dt + \beta I_t
\sigma(t,I_t) dB_t \\
I_0&=&\ds i_0

\end{array}\right.
\label{Ilim}
\end{equation}
where $i_0=\sum_{j=1}^M w_js^j_0$ and $\delta$ and $\beta$ are two constant parameters that will be
discussed later.

In the following theorem, we give an upper bound for the
$L^{2p}$-distance between $(I^{M}_t)_{t\in[0,T]}$ and
$(I_t)_{t\in[0,T]}$ under mild assumption on the volatility
coefficients :

\begin{theorem}\label{convind}
Let $p\in{\mathbb N}^*$. Under assumption ($\cH$\ref{hyp:LIP}) and
if the following assumptions on the volatility coefficients hold,
\begin{hypo}
  \label{hyp:bornitude}
$\exists K_b$ such that $\forall (t,s) \in [0,T]\times \rr_+,\quad
|\sigma(t,s)|+|\eta_j(t,s)|\leq K_b$.
\end{hypo}
\begin{hypo}\label{hyp:xsLip} $\exists K_\sigma$ such that $\forall
(t,s_{1},s_{2}) \in [0,T]\times \rr_+\times \rr_+,\quad
|s_{1}\sigma(t,s_{1})-s_{2}\sigma(t,s_{2})|\leq K_\sigma|s_{1}-s_{2}|$.
\end{hypo}

then
\[\ee\left(\sup_{0\leq t \leq T}|I^{M}_t-I_t|^{2p}\right) \leq C_T \left(\left(\sum_{j=1}^{M}
w_{j}^2\right)^{\!\!p} + \left(\sum_{j=1}^M
  w_j|\beta_j-\beta|\right)^{2p}+ \left(\sum_{j=1}^M
w_j|\delta_j-\delta|\right)^{2p} \right)\] where \[\ds C_T=8^{2p-1}
T^p (T^p+K_pK_b^{2p})C_p\exp\left(4^{2p-1}T(2^{2p-1}K_pT^{p-1}(\beta
  K_\sigma)^{2p}+(2T)^{2p-1}
\delta^{2p}+r^{2p} \,T^{2p-1})\right)\] and \[C_p=\max_{1\leq j \leq
M}|s^{j}_{0}|^{2p} \exp\left(\left(2r+(2p-1)(\max_{j \geq
1}\beta_j^2+1)K_b^2\right)pT\right).\]
\end{theorem}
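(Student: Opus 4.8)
The plan is to write the stochastic differential equation satisfied by the difference $I^M_t-I_t$, split each of its coefficients into a part that is Lipschitz in $I^M_s-I_s$ and a ``model error'' part, bound the error parts using a priori moment estimates on the stocks, bound the Lipschitz part by a constant times $\int_0^t\ee\big[\sup_{s\le u}|I^M_s-I_s|^{2p}\big]\,du$ through assumption ($\cH$\ref{hyp:xsLip}), and conclude with Gronwall's lemma. The first ingredient is an a priori bound on the moments of the stocks: each $S^{j,M}$ solves a linear SDE whose diffusion coefficients $\beta_j\sigma(t,I^M_t)$ and $\eta_j(t,S^{j,M}_t)$ are bounded by $K_b$ (assumption ($\cH$\ref{hyp:bornitude})) and whose drift rate $r-\delta_j$ is at most $r$; applying It\^o's formula to $(S^{j,M}_t)^{2p}$, localizing to remove the local martingale part and taking expectations gives $\ee\big[(S^{j,M}_t)^{2p}\big]\le|s^j_0|^{2p}+\big(2pr+p(2p-1)(\beta_j^2+1)K_b^2\big)\int_0^t\ee\big[(S^{j,M}_u)^{2p}\big]\,du$, hence by Gronwall's lemma $\sup_{t\le T}\ee\big[(S^{j,M}_t)^{2p}\big]\le C_p$ with $C_p$ exactly the constant of the statement. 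A parallel Burkholder--Davis--Gundy estimate bounds $\ee\big[\sup_{t\le T}(S^{j,M}_t)^{2p}\big]$ and, likewise, $\ee\big[\sup_{t\le T}I_t^{2p}\big]$; I only use this to guarantee that $\phi(t):=\ee\big[\sup_{s\le t}|I^M_s-I_s|^{2p}\big]$ is finite, so that the concluding Gronwall step is legitimate.

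For the decomposition I use $I^M_t=\sum_{j}w_jS^{j,M}_t$ to rewrite the coefficient differences between \eqref{indexSDE} and \eqref{Ilim} as
\[
\sum_{j}\delta_jw_jS^{j,M}_s-\delta I_s=\delta(I^M_s-I_s)+\sum_{j}(\delta_j-\delta)w_jS^{j,M}_s,
\]
\[
\Big(\sum_{j}\beta_jw_jS^{j,M}_s\Big)\sigma(s,I^M_s)-\beta I_s\sigma(s,I_s)=\beta\big(I^M_s\sigma(s,I^M_s)-I_s\sigma(s,I_s)\big)+\Big(\sum_{j}(\beta_j-\beta)w_jS^{j,M}_s\Big)\sigma(s,I^M_s),
\]
together with the extra idiosyncratic martingale term $\sum_{j}w_jS^{j,M}_s\eta_j(s,S^{j,M}_s)\,dW^j_s$ of the index, which has no counterpart in \eqref{Ilim}. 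Writing $I^M_t-I_t$ as the corresponding sum of Lebesgue and It\^o integrals, taking $\sup_{s\le t}$, raising to the power $2p$ and using $\big|\sum_{i=1}^n x_i\big|^{2p}\le n^{2p-1}\sum_i|x_i|^{2p}$, $\phi(t)$ splits into a sum of ``source'' terms and ``Gronwall'' terms, up to explicit powers of $2$.

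The source terms are estimated as follows. For the $\delta$-discrepancy drift, H\"older in time and then Jensen over $j$ with weights $w_j|\delta_j-\delta|/\sum_k w_k|\delta_k-\delta|$, combined with the a priori bound, give a contribution at most a constant times $T^{2p}\big(\sum_{j}w_j|\delta_j-\delta|\big)^{2p}C_p$; for the $\beta$-discrepancy stochastic integral, Burkholder--Davis--Gundy, then $|\sigma|\le K_b$, then the same two Jensen steps give at most a constant times $K_pK_b^{2p}T^p\big(\sum_{j}w_j|\beta_j-\beta|\big)^{2p}C_p$; for the idiosyncratic integral, the independence of the $W^j$ makes its quadratic variation $\sum_{j}\int_0^{\cdot}w_j^2(S^{j,M}_s)^2\eta_j^2(s,S^{j,M}_s)\,ds$, and Burkholder--Davis--Gundy, $|\eta_j|\le K_b$, Jensen over $j$ with weights $w_j^2/\sum_k w_k^2$ and Jensen in time give at most a constant times $K_pK_b^{2p}T^p\big(\sum_{j}w_j^2\big)^pC_p$; altogether these three contributions produce the bracketed factor of the theorem. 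The Gronwall terms are the $r$- and $\delta$-drifts in $I^M_s-I_s$ together with the martingale $\beta\big(I^M_s\sigma(s,I^M_s)-I_s\sigma(s,I_s)\big)\,dB_s$: H\"older in time for the two drifts, and Burkholder--Davis--Gundy followed by assumption ($\cH$\ref{hyp:xsLip}), i.e. $|I^M_s\sigma(s,I^M_s)-I_s\sigma(s,I_s)|\le K_\sigma|I^M_s-I_s|$, for the martingale, bound each of them by a constant times $\int_0^t\phi(s)\,ds$, with constants of order $r^{2p}T^{2p-1}$, $2^{2p-1}\delta^{2p}T^{2p-1}$ and $2^{2p-1}K_p(\beta K_\sigma)^{2p}T^{p-1}$ respectively. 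Collecting everything, $\phi(t)\le B+C\int_0^t\phi(s)\,ds$, where $B$ is $8^{2p-1}T^p(T^p+K_pK_b^{2p})C_p$ times the bracketed factor and $C$ is the quantity inside the exponential of $C_T$; since $\phi$ is finite, Gronwall's lemma yields $\phi(T)\le B\,e^{CT}$, which is the asserted bound.

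The step I expect to be the main obstacle is the estimation of the source terms: getting the precise exponents --- $\big(\sum_j w_j^2\big)^p$ for the idiosyncratic term but $\big(\sum_j w_j|\beta_j-\beta|\big)^{2p}$ and $\big(\sum_j w_j|\delta_j-\delta|\big)^{2p}$ for the two discrepancies --- together with the exact constants $C_p$ and $K_pK_b^{2p}$, requires choosing the Jensen weights exactly as above and carefully tracking how Burkholder--Davis--Gundy composes with the double averaging over the index $j$ and over time. Once the a priori bound of the first paragraph and the decomposition of the second are set up, the rest is bookkeeping.
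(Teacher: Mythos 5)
Your proof is correct and takes essentially the same route as the paper: the same moment lemma via Itô and Gronwall, the same algebraic decompositions of the drift and diffusion differences (isolating the $\delta(I^M_s-I_s)$ and $\beta(I^M_s\sigma(s,I^M_s)-I_s\sigma(s,I_s))$ Gronwall parts from the $(\delta_j-\delta)$, $(\beta_j-\beta)$, and idiosyncratic source parts), the same Burkholder--Davis--Gundy application, and the same final Gronwall step with the same constants. The only cosmetic difference is that you estimate the source terms via Jensen with explicit normalized weights, while the paper expands the $p$-th (resp.\ $2p$-th) power into a multi-index sum and applies iterated H\"older --- these are the same bound.
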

According to this result proved in the appendix, the smaller
$P^M_w\stackrel{\rm def}{=}\sqrt{\sum_{j=1}^Mw_j^2}$,
$P^M_\beta\stackrel{\rm def}{=}\sum_{j=1}^M w_j|\beta_j-\beta|$ and
$P^M_\delta\stackrel{\rm def}{=}\sum_{j=1}^M w_j|\delta_j-\delta|$,
the closer $I$ and $I^M$. The first quantity $P^M_w$ is small when
the weighting vector $(w_1,\hdots,w_M)$ is close to
$(\frac{1}{M},\hdots,\frac{1}{M})$ and $M$ is large. Let us now
discuss how to choose $\beta$ and $\delta$ minimizing $P^M_\beta$
and $P^M_\delta$. Let $Y_\beta$ and $Y_\delta$ be discrete random
variables having the following probability distributions : \[\forall
j \in \{1,\dots,M\},\quad \quad
\pp\left(Y_\beta=\beta_j\right)=\frac{w_j}{\sum_{i=1}^M w_i}\quad
\text{and}\quad
\pp\left(Y_\delta=\delta_j\right)=\frac{w_j}{\sum_{i=1}^M w_i} .\]
Then
\[P^M_\beta=\left(\sum_{i=1}^M w_i\right)\times\ee\left|Y_\beta-\beta\right|
\quad \text{and} \quad P^M_\delta=\left(\sum_{i=1}^M w_i\right)\times\ee\left|Y_\delta-\delta\right|.\]
Consequently, the optimal choice of the parameters is the
median\footnote{The median of a real random variable $X$ is any real
number $m$ satisfying : \[\pp(X \leq m) \geq \frac{1}{2} \,\, \text{
and }\,\,\pp(X \geq m) \geq \frac{1}{2}.\] It has the property of
minimizing the $L^1$-distance to $X$ : $\ds m=\arg\min_{x\in\rr}
\ee|X-x|.$} of $Y_\beta$ for $\beta$ and the median of $Y_\delta$
for $\delta$. Nevertheless, to preserve the interpretation of $\beta_j$
as $\frac{Cov(r_j,r_I)}{Var(r_I)}$ which is equal to
$\frac{\beta\beta_j}{\beta^2}$ for the
simplified index dynamics, one should take $\beta=1$. In Table \ref{tab:poids}, we see that
on the example of the Eurostoxx index at December 21
2007, the optimal choice of $\beta$ is very close to 1 and that the
quantities of interest, $(P^M_{\beta_{opt}})^2$ and
$(P^M_{\beta=1})^2$ are also very close to each other.

\begin{table}[!h]
\begin{center}
\begin{tabular}{|c|c|c|c|}
\hline
$(P^M_w)^2$ &$\beta_{opt}$&$(P^M_{\beta_{opt}})^2$&$(P^M_{\beta=1})^2$\\
\hline 0.026&0.975&0.0173&0.0174\\ \hline
\end{tabular}
\end{center}
\caption{Computation of $(P^M_w)^2, \beta_{opt}$ and
$(P^M_{\beta_{opt}})^2$ for the Eurostoxx index at December 21,
2007. The beta coefficients are estimated on a two year history.}
\label{tab:poids}
\end{table}

The next theorem states that, under an additional assumption on the
volatility coefficients, the $L^{2p}$-distance between a stock
$(S^{j,M}_{t})_{t\in [0,T]}$ and the solution of the SDE obtained by
replacing $I^M$ by $I$
\begin{equation}
   \frac{dS^{j}_t}{S^{j}_t}=(r-\delta_j) dt + \beta_j
\,\sigma(t,I_t) dB_t + \eta_j(t,S^{j}_t) dW^j_t,\;S^j_0=s^{j}_0\label{edsjlim}
\end{equation}
is also controlled by $2p$-powers of $P^M_w$, $P^M_\beta$ and
$P^M_\delta$. One major drawback of the limiting simplified model
\eqref{Ilim}-\eqref{edsjlim} is that the limit index $I_t$ is only
approximately equal to the reconstructed index level
$\overline{I}^M_t\stackrel{\rm def}{=}\sum_{j=1}^M w_j S^j_t$. The next result also gives
an estimation of the
difference between $I^M$ and $\overline{I}^M$ in terms of $P^M_w$, $P^M_\beta$ and
$P^M_\delta$, which combined with the previous theorem, provides an
estimation of the difference between $I$ and $\overline{I}^M$.
\begin{theorem}\label{constock}
Let $p\in{\mathbb N}^*$. Under the assumptions of Theorem 1 and if
\begin{hypo}\label{hyp:xetaLip} $\exists K_\eta$ such that $\forall j,\;\forall
(t,s_{1},s_{2}) \in [0,T]\times \rr_+\times \rr_+,\quad
|s_{1}\eta_j(t,s_{1})-s_{2}\eta_j(t,s_{2})|\leq K_\eta|s_{1}-s_{2}|$.

$\exists K_{Lip}$ such that $\forall (t,s_{1},s_{2}) \in [0,T]\times
\rr_+\times \rr_+,\quad |\sigma(t,s_{1})-\sigma(t,s_{2})|\leq
K_{Lip}|s_{1}-s_{2}|$.
\end{hypo}
Then, $\forall j \in \{1,\dots, M\}$,
\[\ee\left(\sup_{0\leq t \leq T}|S^{j,M}_t-S^j_t|^{2p}\right) \leq
\widetilde{C}^j_T \left(\left(\sum_{j=1}^{M} w_{j}^2\right)^{\!\!p}
+ \left(\sum_{j=1}^M
  w_j|\beta_j-\beta|\right)^{2p}+ \left(\sum_{j=1}^M
w_j|\delta_j-\delta|\right)^{2p} \right)\] where
\[\widetilde{C}^j_{T}=6^{2p-1}K_pT^p\beta_{j}^{2p}C_{2p}^{\frac{1}{2}} K_{Lip}^{2p} \,\, e^{3^{2p-1}((r-\delta_{j})^{2p}T^{2p-1}+K_pT^{p-1}K_\eta^{2p}+
2^{2p-1}K_pT^{p-1}\beta_{j}^{2p}K_b^{2p})T}.\] Moreover, for
$\overline{I}_t^M=\sum_{j=1}^M w_j S^j_t$, one has
\[\ee\left(\sup_{0\leq t \leq
T}|I^M_t-\overline{I}^{M}_t|^{2p}\right) \leq \widetilde{C}_T
\left(\sum_{j=1}^{M} w_{j}\right)^{2p}\left(\left(\sum_{j=1}^{M}
w_{j}^2\right)^{\!\!p}+ \left(\sum_{j=1}^M
  w_j|\beta_j-\beta|\right)^{2p}+ \left(\sum_{j=1}^M
w_j|\delta_j-\delta|\right)^{2p} \right)\] where $\ds
\widetilde{C}_T=\max_{1\leq j \leq M} \widetilde{C}^j_T$.
\end{theorem}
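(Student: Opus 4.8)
The plan is to prove the stock estimate first, the index‑reconstruction bound then following by a short convexity argument. Fix $j$ and set $D_t=S^{j,M}_t-S^j_t$; since $S^{j,M}$ and $S^j$ start from the same value $s^j_0$, are driven by the same Brownian motions and share the same drift coefficient, we have $D_0=0$ and
\[dD_t=(r-\delta_j)D_t\,dt+\beta_j\bigl(S^{j,M}_t\sigma(t,I^M_t)-S^j_t\sigma(t,I_t)\bigr)dB_t+\bigl(S^{j,M}_t\eta_j(t,S^{j,M}_t)-S^j_t\eta_j(t,S^j_t)\bigr)dW^j_t .\]
In particular the dividend rates produce no source term here, so the whole $P^M_\delta$‑dependence will enter only through the index error $I^M-I$. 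The first step is to split the $\sigma$‑coefficient by adding and subtracting $S^{j,M}_t\sigma(t,I_t)$,
\[S^{j,M}_t\sigma(t,I^M_t)-S^j_t\sigma(t,I_t)=S^{j,M}_t\bigl(\sigma(t,I^M_t)-\sigma(t,I_t)\bigr)+D_t\,\sigma(t,I_t),\]
the first term being bounded in absolute value by $K_{Lip}\,|S^{j,M}_t|\,|I^M_t-I_t|$ by the Lipschitz assumption on $\sigma$ in ($\cH$\ref{hyp:xetaLip}), and the second by $K_b\,|D_t|$ by ($\cH$\ref{hyp:bornitude}); the $\eta_j$‑coefficient is directly estimated by $K_\eta\,|D_t|$, again by ($\cH$\ref{hyp:xetaLip}).

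Next I would run the standard moment/Gronwall argument at level $2p$: bound $\sup_{s\le t}|D_s|^{2p}$ by $3^{2p-1}$ times the sum of the $2p$‑th powers of the drift integral and of the two stochastic integrals, take expectations, apply the Burkholder--Davis--Gundy inequality (constant $K_p$) to the two martingale parts and H\"older's inequality in time to the finite‑variation part. Every contribution involving $D$ yields a term of the form $(\mathrm{const})\int_0^t\ee\bigl(\sup_{u\le s}|D_u|^{2p}\bigr)ds$, so the only genuinely new ingredient is the source term, where Cauchy--Schwarz decouples the unbounded multiplicative factor $S^{j,M}$ from the index error:
\[\ee\bigl(|S^{j,M}_s|^{2p}\,|I^M_s-I_s|^{2p}\bigr)\le\sqrt{\ee\,|S^{j,M}_s|^{4p}}\;\sqrt{\ee\,|I^M_s-I_s|^{4p}}.\]
The first factor is controlled by the a priori moment bound on the stocks already used for Theorem \ref{convind} --- under ($\cH$\ref{hyp:bornitude}) each $S^{j,M}$ is a geometric‑type process with bounded volatility coefficients, hence has all moments finite and uniformly bounded in $t$, which is the origin of the factor $C_{2p}^{1/2}$ --- while the second is bounded by Theorem \ref{convind} applied with $2p$ in place of $p$; using $\sqrt{a+b+c}\le\sqrt a+\sqrt b+\sqrt c$ this gives the announced control by $(\sum_j w_j^2)^p+(P^M_\beta)^{2p}+(P^M_\delta)^{2p}$. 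Gronwall's lemma then closes the inequality with the constant $\widetilde C^j_T$ (together with the Theorem \ref{convind} constant at level $2p$, which enters multiplicatively), the exponential factor being exactly the Gronwall term generated by the three $D$‑dependent contributions and the prefactor $6^{2p-1}K_pT^p\beta_j^{2p}C_{2p}^{1/2}K_{Lip}^{2p}$ coming from the source term.

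Finally, for the index‑reconstruction estimate, note that $I^M_t-\overline{I}^M_t=\sum_{j=1}^M w_j\,(S^{j,M}_t-S^j_t)$, so Jensen's inequality applied to the convex function $x\mapsto|x|^{2p}$ and the probability weights $w_j/\sum_i w_i$ gives, pathwise,
\[\sup_{0\le t\le T}\bigl|I^M_t-\overline{I}^M_t\bigr|^{2p}\le\Bigl(\sum_{i=1}^M w_i\Bigr)^{2p-1}\sum_{j=1}^M w_j\sup_{0\le t\le T}\bigl|S^{j,M}_t-S^j_t\bigr|^{2p}.\]
Taking expectations, inserting the first part of the theorem, bounding each $\widetilde C^j_T$ by $\widetilde C_T=\max_j\widetilde C^j_T$ and using $\bigl(\sum_i w_i\bigr)^{2p-1}\sum_j w_j=\bigl(\sum_i w_i\bigr)^{2p}$ yields the stated inequality. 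The main obstacle is the $\sigma$‑term of the stock dynamics: unlike the residual $\eta_j$‑term it is not Lipschitz in $D$ alone, and its natural bound carries the unbounded factor $S^{j,M}$ multiplying $|I^M-I|$; dealing with this is what forces both the Cauchy--Schwarz split and the use of Theorem \ref{convind} at the doubled exponent together with uniform $4p$‑moment bounds on the stocks, and keeping the resulting constants consistent is the only delicate point.
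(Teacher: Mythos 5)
Your proof is correct and follows essentially the same path as the paper's: the same split of the $\sigma$-coefficient as $S^{j,M}_t(\sigma(t,I^M_t)-\sigma(t,I_t))+\sigma(t,I_t)(S^{j,M}_t-S^j_t)$, the same $3^{2p-1}$/BDG/H\"older/Gronwall machinery, the same Cauchy--Schwarz decoupling of the unbounded factor $S^{j,M}$ from $|I^M-I|$ using the a priori moment bound (Lemma~\ref{lem:moments} at exponent $2p$, hence $C_{2p}^{1/2}$) together with Theorem~\ref{convind} at exponent $2p$ and the sublinearity of $\sqrt{\cdot}$; you also correctly observe that the constant from Theorem~\ref{convind} at level $2p$ should enter multiplicatively, a factor the paper's stated $\widetilde C^j_T$ silently absorbs. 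The one place you deviate is the reconstructed-index bound: you derive $|\sum_j w_j D_j|^{2p}\le(\sum_i w_i)^{2p-1}\sum_j w_j|D_j|^{2p}$ via Jensen with the normalized weights $w_j/\sum_i w_i$, whereas the paper expands the $2p$-th power as a multi-sum $\sum_{j_1}\cdots\sum_{j_{2p}}\prod_k w_{j_k}|D_{j_k}|$ and applies H\"older across the $2p$ factors; both prove the same weighted power-mean inequality, and yours is the more compact of the two.
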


The proof can also be found in the appendix. In the following
corollary,  we consider the limit ${M \to \infty}$ supposing that
the weight of the $j$-th stock, now denoted by $w_j^M$,
depends on $M$.

\begin{Corollary}
Under the assumptions of Theorems 1 and 2 and if

\begin{hypo}
\label{hyp:CT} there exists  a finite constant $A$ such that $\ds
\max_{j\geq 1} \left((s^j_0)^{2} + (\beta_j)^2 +
(\delta_j)^2\right)\leq A$,
\end{hypo}
\begin{hypo}
\label{hyp:i0} $\ds I^M_0=\sum_{j=1}^Mw_j^Ms^j_0\ov{\longrightarrow}_{M \to \infty} i_0\in(0,+\infty)$,
\end{hypo}
\begin{hypo}
\label{hyp:w} $\ds P^M_w=\sqrt{\sum_{j=1}^M (w_j^M)^2}
\ov{\longrightarrow}_{M \to \infty} 0$,
\end{hypo}
\begin{hypo}
\label{hyp:beta} $\ds P^M_\beta=\sum_{j=1}^M w_j^M |\beta_j-\beta|
 \ov{\longrightarrow}_{M \to \infty} 0$,
\end{hypo}
\begin{hypo}
\label{hyp:delta} $\ds P^M_\delta =\sum_{j=1}^M w_j^M
|\delta_j-\delta| \ov{\longrightarrow}_{M \to \infty} 0$,
\end{hypo}
then, for any $p\in\nn^*$, one has
\[\ee\left(\sup_{0\leq t \leq T}|I^{M}_t-I_t|^{2p}\right) \ov{\longrightarrow}_{M \to \infty}
0\;\mbox{ and }\forall j\in\nn^*,\quad \ee\left(\sup_{0\leq t \leq T}|S^{j,M}_t-S^j_t|^{2p}\right) \ov{\longrightarrow}_{M \to \infty}
0.\]
If, in addition, $\ds \sup_M \sum_{j=1}^M w_j^M < \infty$, then
$\ee\left(\sup_{0\leq t \leq T}|I^{M}_t-\overline{I}^M_t|^{2p}\right) \ov{\longrightarrow}_{M \to \infty}
0$.
\end{Corollary}
Assumptions ($\cH$\ref{hyp:i0}), ($\cH$\ref{hyp:w}), ($\cH$\ref{hyp:beta})
and ($\cH$\ref{hyp:delta}) hold for instance when $w^M_j=\frac{1}{M}$
for $1\leq j\leq M$ and $s^j_0\ov{\longrightarrow}_{j\to \infty}
i_0$, $\beta_j\ov{\longrightarrow}_{j\to \infty}
\beta$ and $\delta_j\ov{\longrightarrow}_{j\to \infty}
\delta$.

\subsection*{Simplified model}
To sum up, we have shown that, under mild assumptions, when the
number of underlying stocks is large, the original model may be
approximated by the following dynamics
\begin{equation}
\begin{array}{ll}
\ds \forall j \in \{1,\dots,M\}, &\ds
\frac{dS^{j}_t}{S^{j}_t}=(r-\delta_j) dt + \beta_j
\,\sigma(t,I_t) dB_t + \eta_j(t,S^{j}_t) dW^j_t\\[5mm]
&\ds \frac{dI_t}{I_t}= (r-\delta_I)dt + \sigma(t,I_t) dB_t.
\end{array}
\end{equation}
Of course the distance between this limiting model and the original one
increases with the maturity.

Interestingly, we end up with a local volatility model for the index
and, for each stock, a stochastic volatility model decomposed into a
systemic part driven by the index level and an intrinsic part. The
calibration procedures presented in the next section are based on this
intuition :
even in the original model, we are going to calibrate $\sigma$ as if it
was the local volatility function of the index.

Note
that this simplified model is not valid for options written on the
index together with all its composing stocks since the index is no
longer an exact, but an approximate, weighted sum of the stocks. In
this case, one should consider the reconstructed index
$\overline{I}_t^M=\sum_{j=1}^M w_j S^j_t$ or use the original model.
The simplified model can be used for options
written on the stocks or on the index or even on the index together
with few stocks.

\section{Model calibration}
Calibration, which is how to determine the model parameters in order
to fit market prices at best, is of paramount importance in
practice. In the following, we try to tackle this issue for both our
simplified and original models.
\subsection{Simplified model}
In the simplified limiting model, the only factor which influences
the dynamics of a given stock is the simplified index $I_t$ which
evolves according to an autonomous SDE. So it is enough to address
the calibration of a given stock together with the index and we drop
the index $j$ of the stock for notational simplicity.
\begin{equation}
\begin{array}{ll}
&\ds\frac{dS_t}{S_t}=(r-\delta) dt + \beta
\,\sigma(t,I_t) dB_t + \eta(t,S_t) dW_t,\;S_0=s_0\\[5mm]
&\ds \frac{dI_t}{I_t}= (r-\delta_I)dt + \sigma(t,I_t) dB_t,\;I_0=i_0.
\end{array}
\label{modelsimp1stock}
\end{equation}
The short interest rate and the dividend yields can be extracted
from the market. The calibration of the local volatility function $\sigma$ to
fit index option prices is a classical problem. According to Dupire
\cite{dupire}, if $C_I(t,K)$ denotes the market price of the call option
with maturity $t$ and strike $K$ written on the index, then for
$$\sigma^2(t,K)=2\frac{\frac{\partial C_I}{\partial
  t}(t,K)+(r-\delta_I)K\frac{\partial C_I}{\partial K}(t,K)+\delta_I
C_I(t,K)}{K^2\frac{\partial^2 C_I}{\partial K^2}(t,K)},$$
one has $C_I(t,K)=\ee\left(e^{-rt}(I_t-K)^+\right)$ for all $t,K>0$. Of
course, in practice the market quotes call options only for a finite
number of couples $(t,K)$. What seems a common practice among banks is
to look for $\sigma$ in a parametric family of functions and compute the
parameters minimizing the distance between these quoted prices and the
call prices associated with the parametrized local volatility
function. Since each practitioner may choose his favorite procedure to
address this classical problem of local volatility calibration, we will
not enter in more details.

We also assume that a local volatility function is associated with the
stock by the same procedure and denote by $v_{loc}(t,x)$ the local variance
function of the stock computed as the square of this local volatility
function. So the local volatility model
\begin{equation}
   \ds\frac{d\overline{S}_t}{\overline{S}_t}=(r-\delta) dt +
\sqrt{v_{loc}(t,\overline{S}_t)}dW_t,\;\overline{S}_0=s_0\label{lvms}
\end{equation}
is calibrated to the quoted prices of vanilla options written on the stock.
In \eqref{modelsimp1stock}, by independence between $B$ and
$W$, the variance of the stock at time
$t$ is equal to $\beta^2\sigma^2(t,I_t)+\eta^2(t,S_t)$. According to
Gyöngy \cite{Gyongy}, if
\begin{equation*}
   \forall
t,x>0,\;\ee\left(\beta^2\sigma^2(t,I_t)+\eta^2(t,S_t)|S_t=x\right)=v_{loc}(t,x)
\end{equation*}
then \eqref{modelsimp1stock} and the local volatility model \eqref{lvms}
induce the same marginal distributions for the stock and therefore the
same prices for the vanilla call options written on it :
$\ee\left(e^{-rt}(S_t-K)^+\right)=\ee
\left(e^{-rt}(\overline{S}_t-K)^+\right)$
for all $t,K>0$. Hence if
\begin{equation}
   \forall
t,x>0,\;\eta(t,x)=\sqrt{v_{loc}(t,x)-\beta^2\ee\left(\sigma^2(t,I_t)|S_t=x\right)},\label{eqvar}\end{equation}
then the stock dynamics in \eqref{modelsimp1stock} is calibrated to the
quoted prices of the vanilla options written on the stock. It remains to
choose the coefficient $\beta$ and the function $\eta$ so that this
equality is satisfied. The fact that the law of $(S_t,I_t)$ given by
\eqref{modelsimp1stock} and therefore
the conditional expectation in \eqref{eqvar} depend on $(\beta,\eta)$
makes this problem difficult. Nevertheless, intuitively, when one fixes
a value of $\beta$ that is not too large,
one should be able to find a function
$\eta$ such that \eqref{eqvar} is satisfied.  The calibration
of the stock smile seems over-parametrized and one should rely on the
interpretation of $\beta$ as a regression coefficient to choose its
value. This issue is discussed in the next section. Then we explain how to approximate the
conditional expectation and deduce $\eta$ for a fixed value of $\beta$.\\Let us already point out that
the calibration of our
simplified model gives an advantage to the fit of index option prices in
comparison with options written on the stocks, which is in line with
the market since index options are usually very liquid in comparison
with individual stock options.
\subsubsection{Choice of the coefficient $\beta$}\label{choixbeta}
The interpretation of $\beta$ as the regression coefficient of the
log-returns of the stock with respect to the log-returns of the index
makes it possible to estimate this coefficient on historical data.
Nevertheless, when the historical estimator
$\beta_{hist}$ is large, then the difference in the r.h.s. of
\eqref{eqvar} may become negative for some $(t,x)$ when
$\beta=\beta_{hist}$. Then the square root is no longer defined and
calibration for this choice of $\beta$ is no longer possible.

In Figure~\ref{fig:vollocproblems}, we have plotted the local
volatility of the stock $x\mapsto\sqrt{v_{loc}(T,s_0x)}$, the local
volatility of the index $x\mapsto\sigma(T,i_0x)$, the systemic part
of the volatility of the stock $x\mapsto\beta_{hist} \sigma(T,s_0x)$
and $x\mapsto\beta_{hist}
\sqrt{\ee\left(\sigma^2(T,I_T)|S_T=s_0x\right)}$ when $\eta$ is set
to zero (which intuitively gives the lowest local volatility
function of the stock that one can obtain in our model
\eqref{modelsimp1stock}) as functions of the moneyness for a
maturity $T=1$ year. We considered three representative components
of the Eurostoxx which is composed of $M=50$ stocks : AXA, ALCATEL
and CARREFOUR at December 21, 2007. We made this choice deliberately
in order to point out the extreme situations that one can face :
\begin{itemize}
\item AXA is an example of a stock with a high historical beta coefficient ($\beta_{hist}=1.4$),
\item CARREFOUR is an example of a stock with a low historical beta coefficient ($\beta_{hist}=0.7$),
\item ALCATEL is an example of a stock with a high volatility level but
  with a rather flat smile ($\beta_{hist}=1.1$).
\end{itemize}

Clearly, we can deduce that the market is choosing a $\beta$
coefficient for both AXA and ALCATEL that is lower than the
historical one whereas, for CARREFOUR, one can plug the historical
$\beta$, or even a larger one, in~(\ref{modelsimp1stock}) and still
be able to calibrate the model.

\newpage
\begin{figure}
\begin{center}
\psfrag{bbbbbbbbbbbbbbbbbbbbbbbbbbb1}{\tiny{$\,\sqrt{v_{loc}(T,s_0x)}$}}
\psfrag{bbbbbbbbbbbbbbbbbbbbbbbbbbb2}{\tiny{$\,\sigma(T,i_0x)$}}
\psfrag{bbbbbbbbbbbbbbbbbbbbbbbbbbb3}{\tiny{$\,\beta
\sigma(T,i_0x)$}}
\psfrag{bbbbbbbbbbbbbbbbbbbbbbbbbbb4}{\tiny{$\,\beta
\sqrt{\ee(\sigma^2(T,I_T)|S_T=s_0x)}$}} \leavevmode
\epsfig{file=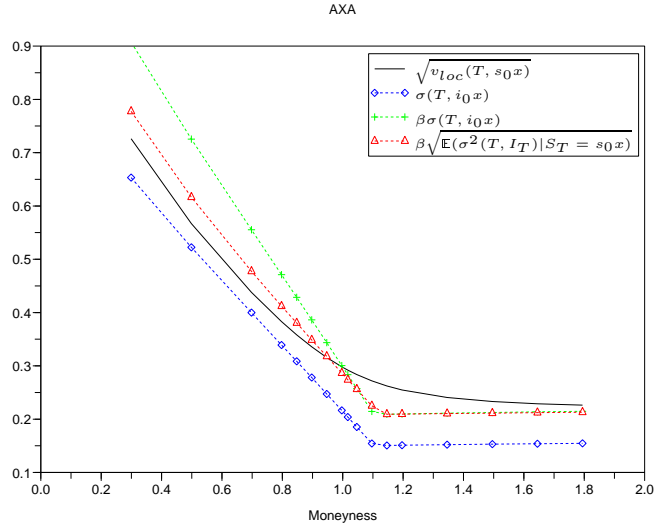,height=50ex}
\epsfig{file=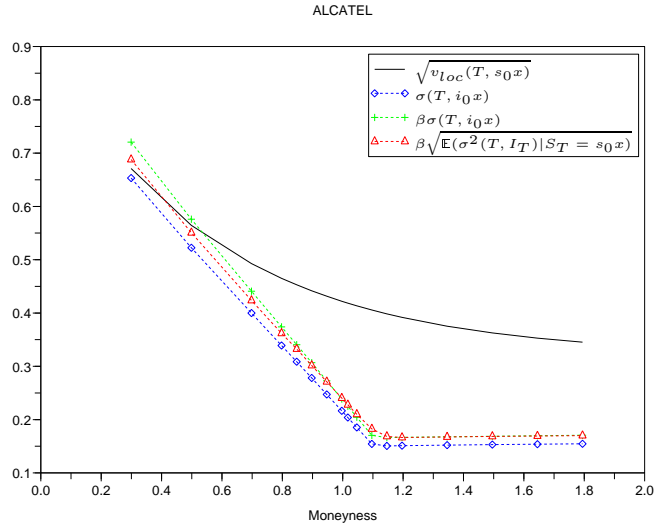,height=50ex}
\epsfig{file=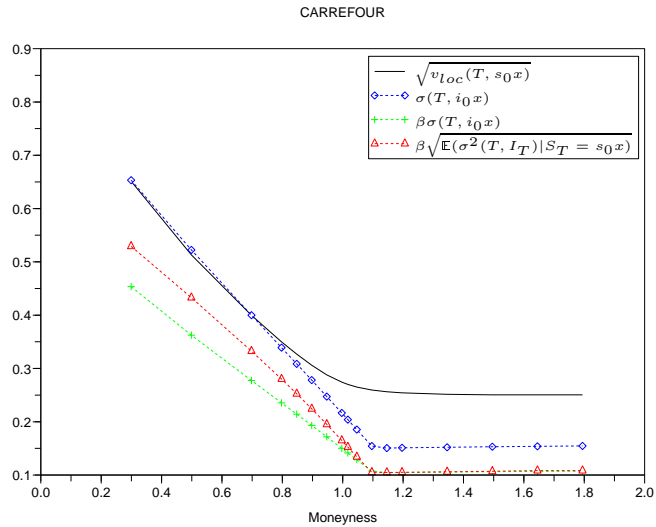,height=50ex}
\end{center}
\caption{Local volatility $x\mapsto\sqrt{v_{loc}(T,s_0x)}$ together
with $x\mapsto\sigma(T,i_0x)$, $x\mapsto\beta_{hist} \sigma(T,i_0x)$
and $x\mapsto\beta_{hist}
\sqrt{\ee\left(\sigma^2(T,I_T)|S_T=s_0x\right)}$ when $\eta$ is set
to zero for AXA, ALCATEL and CARREFOUR.} \label{fig:vollocproblems}
\end{figure}
\thispagestyle{empty}

\clearpage

A satisfactory way to handle the estimation of
the beta coefficient would be to compute an implied beta calibrated
to the prices of options involving the correlation between the stock and the
index. Unfortunately, no such option is liquid in the market (the most liquid
correlation swaps are sensitive to an average correlation
between all the stocks composing the index).

So we suggest to choose
\begin{equation}
   \beta=\min\left(\beta_{hist},\inf_{t,x>0}\frac{\sqrt{v_{loc}(t,s_0x)}}{\sigma(t,i_0x)}\right)
\label{betmin}.\end{equation} Even if we have no proof that this
choice of beta makes the calibration possible, it is sensible and we
have checked that it works on the three examples of AXA, ALCATEL and
CARREFOUR.

When one is interested in options written on
the index together with all its components, one should use the
reconstructed index level $\overline{I}^M_t=\sum_{j=1}^M w_j S^j_t$
instead of $I_t$. Of course, the reconstructed
index dynamics will reproduce the quoted prices of vanilla
options written on the index all the better as $\overline{I}^M_t$ is
close to the calibrated limiting index level $I_t$. According to
Theorems \ref{convind} and \ref{constock}, for this
latter property to hold, one needs
$P^M_{\beta=1}=\sum_{j=1}^M w_j|\beta_j-1|\geq|\sum_{j=1}^Mw_j\beta_j-\sum_{j=1}^M w_j|$
to be small. When, because of the minimum in equality
\eqref{betmin}, $P^M_{\beta=1}$ is larger for the actual
choice of coefficients $\beta$ than for the historical choice, one may
take larger values of beta for stocks like CARREFOUR to decrease
$P^M_{\beta=1}$ and improve the calibration of the
reconstructed index.
\subsubsection{Estimation of the conditional expectation} The idea
behind the following techniques is to circumvent the difficulty of
calibrating the volatility coefficient $\eta$. Indeed, if we plug
the formula \eqref{eqvar} in \eqref{modelsimp1stock}, we
obtain a stochastic differential equation that is nonlinear in the
sense of McKean :

\begin{equation}
\begin{array}{l}
\ds \frac{dS_t}{S_t}=(r-\delta) dt + \beta \,\sigma(t,I_t) dB_t +
\sqrt{v_{loc}(t,S_t)-\beta^2\ee\left(\sigma^2(t,I_t) \,|\,
S_t\right)} dW_t,\;S_0=s_0\\[5mm]
\ds \frac{dI_t}{I_t}= (r-\delta_I)dt + \sigma(t,I_t) dB_t,\;I_0=i_0.
\end{array}
\label{modelsimpNP}
\end{equation}

For an introduction to the stochastic differential equations
nonlinear in the sense of McKean and to propagation of chaos, we
refer to the lecture notes of Sznitman~\cite{Sznitman} and
Méléard~\cite{Meleard}. In our case, the nonlinearity appears in the
diffusion coefficient through the conditional expectation term. This
makes the natural question of existence and uniqueness of a solution
very difficult to handle. The case of a drift coefficient involving
a conditional expectation has only been handled recently even for a
constant diffusion coefficient (see for instance Talay and Vaillant
\cite{TalayVaillant} and Dermoune \cite{Dermoune}). Meanwhile, it is
possible to simulate such a stochastic differential equation by
means of a system of $N$ interacting paths using either a
non-parametric estimation of the conditional expectation or
regression techniques. The advantage of the regression approach over
the non-parametric estimation is that it also yields a smooth
approximation of the function $x\mapsto\ee\left(\sigma^2(t,I_t)
\,|\, S_t=x\right)$ whereas, with a non-parametric method, one has
to interpolate the estimated function and to carefully tune the
window parameter to obtain a smooth approximation.

$\,$\\
\textbf{3.1.2a $\,$ Non-parametric estimation}\\
$\,$

Non-parametric estimators of the conditional expectation, and more
generally non-parametric density estimators, have been widely
studied in the literature. We will focus on kernel estimators of the
Nadaraya-Watson type (see \cite{Watson} and \cite{nadaraya}) : given
$N$ observations $(S_{i,t},I_{i,t})_{i=1\dots N}$ of $(S_t,I_t)$, we
consider the kernel conditional expectation estimator of
$\ee\left(\sigma^2(t,I_t) \,|\, S_t=x\right)$ given by

\[\frac{\ds \sum_{i=1}^N
\sigma^2(t,I_{i,t}) K\left(\frac{x-S_{i,t}}{h_N}\right)}{\ds
\sum_{i=1}^N K\left(\frac{x-S_{i,t}}{h_N}\right)}\] where $K$ is a
non-negative kernel such that $\int_{\mathbb R} K(x)dx=1$ and $h_N$ is a smoothing parameter which tends to
zero as $N\rightarrow +\infty$. This leads to the following system
with $N$ interacting particles : $\forall \, 1\leq i\leq N,$
\begin{equation}
\begin{cases}
    \frac{dS_{i,N,t}}{S_{i,N,t}}=(r-\delta) dt + \beta
\,\sigma(t,I_{i,t}) dB_{i,t} +
\sqrt{v_{loc}(t,S_{i,N,t})-\beta^2\frac{\sum_{j=1}^N
\sigma^2(t,I_{j,t}) K\left(\frac{S_{i,N,t}-S_{j,N,t}}{h_N}\right)}{
\sum_{j=1}^N
K\left(\frac{S_{i,N,t}-S_{j,N,t}}{h_N}\right)}}dW_{i,t},\;S_{i,N,0}=s_0\\[3mm]
 \frac{dI_{i,t}}{I_{i,t}}= (r-\delta_I)dt + \sigma(t,I_{i,t})
dB_{i,t},\;I_{i,0}=i_0
\end{cases}\label{systpart}\end{equation}
where $(B_i,W_i)_{i\geq 1}$ is a sequence of independent
two-dimensional Brownian motions. The integer $i$ indexes the
sample-paths of the fixed stock that we consider. In their dynamics, the
conditional expectation term has been replaced by interaction. The price
in the calibrated model of a European option with maturity $T$ and payoff function $h:C([0,T],\rr)\to\rr$ written on the
stock may be approximated by
\begin{equation}
  \frac{1}{N}\sum_{i=1}^Ne^{-rT}h(S_{i,N,.})
\label{prixcal}.\end{equation}

The $2N$-dimensional SDE may be
discretized using the Euler scheme. Let $n\in\nn^*$ and $0=t_0< \cdots <t_n=T$ be the subdivision with step $\frac{T}{n}$
of $[0,T]$. For each $k\in \{0,\dots,n-1\}$, $\forall \, 1\leq i\leq
N,$
\begin{equation*}
\begin{cases}
\overline{S}_{i,N,t_{k+1}}=\overline{S}_{i,N,t_k}\bigg(1+\sqrt{v_{loc}(t_k,\overline{S}_{i,N,t_k})-\beta^2\frac{\sum_{j=1}^N
\sigma^2(t_k,\overline{I}_{j,t_k})
K\left(\frac{\overline{S}_{i,N,t_k}-\overline{S}_{j,N,t_k}}{h_N}\right)}{\sum_{j=1}^N
K\left(\frac{\overline{S}_{i,N,t_k}-\overline{S}_{j,N,t_k}}{h_N}\right)}} \sqrt{\frac{T}{n}} \tilde{G}_{i,k}
\\\phantom{\overline{S}_{i,N,t_{k+1}}=\overline{S}_{i,N,t_k}\bigg(}+(r-\delta)
\frac{T}{n} + \beta \,\sigma(t_k,\overline{I}_{i,t_k})
\sqrt{\frac{T}{n}} G_{i,k}\bigg) \\[3mm]
\overline{I}_{i,t_{k+1}}=
\overline{I}_{i,t_{k}}\left(1+(r-\delta_I)\frac{T}{n} +
\sigma(t_k,\overline{I}_{i,t_k}) \sqrt{\frac{T}{n}} G_{i,k}\right)
\end{cases}\end{equation*}
where $(G_{i,k})_{1\leq i \leq N,0 \leq k \leq n-1}$ and
$(\tilde{G}_{i,k})_{1\leq i \leq N,0 \leq k \leq n-1}$ are independent
centered and reduced Gaussian random variables.

$\,$\\
\textbf{3.1.2b $\,$ Parametric estimation}\\
$\,$

Another approach to estimate conditional expectations is to use
parametric estimators, or projection. This idea has also been widely
used and studied previously (for example in finance, one can think
of the Longstaff-Schwartz algorithm for pricing American options
\cite{LongstaffSchwartz}). Noting that the conditional expectation
is a projection operator on the space of square integrable random
variables, one can approximate $\ee\left(\sigma^2(t,I_t) \,|\,
S_t=x\right)$ by the parametric estimator $\sum_{l=1}^L \alpha_l f_l(x)$
where $(f_l)_{l=1\dots L}$ is a functional basis and
$\alpha=(\alpha_l)_{l=1\dots L}$ is a vector of parameters estimated
by least mean squares : given $N$ observations
$(S_{i,t},I_{i,t})_{i=1\dots N}$ of $(S_t,I_t)$, $\alpha$ minimizes
$\sum_{i=1}^N \left(\sigma^2(t,I_{i,t}) -\sum_{l=1}^L \alpha_l
f_l(S_{i,t})\right)^2$.

\subsubsection{Numerical results}
\subsubsection*{3.1.3a $\,$ A toy example}

We try to calibrate a stock with a
local variance function $v_{loc}$ constant and equal to $v$. We choose $\sigma$ as the
local volatility function of the Eurostoxx index fitted to the market at
December 21, 2007.

We simulate the system of $N$ interacting paths  \eqref{systpart} and
price call options for different strikes using \eqref{prixcal}. In Figure \ref{fig:volimp2}, we plot the
implied volatility at $T=1$ obtained for independent simulations of $N=5000$
paths and see that they are indeed
close to the desired volatility level $\sqrt{v}$. This example was generated with
the following arbitrary set of parameters :
$$S_0=100,\;\beta=0.7,\;r=0.05,\;\delta=\delta_I=0,\;\sqrt{v}=0.6,\;N=5000,\;n=20.$$

In this example and for all the following numerical experiments, we
use a Gaussian kernel : $K(u)=\frac{1}{\sqrt{2 \pi}}
e^{-\frac{u^2}{2}}$. The smoothing parameter $h_N$ is set to
$N^{-\frac{1}{5}}$ which is the optimal bandwidth that one obtains
when minimizing the asymptotic mean square error of the
Nadaraya-Watson estimator under some regularity assumptions and
assuming independence of the random variables involved (see for
example Bosq~\cite{Bosq}).

\begin{figure}[!ht]
\begin{center}
\leavevmode \epsfig{file=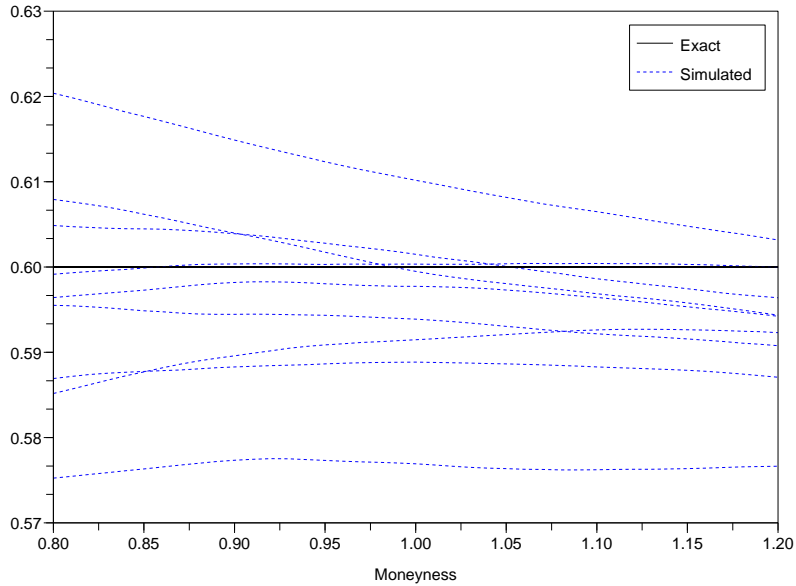,height=60ex}
\caption{Implied volatility obtained for nine independent
simulations with $N=5000$ paths.} \label{fig:volimp2}
\end{center}
\end{figure}

$\,$\\
\textbf{3.1.3b $\,$ An example with real data}

In the following, we test our model with real data. More precisely,
given the local volatilities of the Eurostoxx index and of Carrefour
at December 21, 2007, we simulate (\ref{modelsimpNP}) by different methods for a one year
maturity.

$\,$\\
\textbf{An acceleration technique}

\vspace{2mm} The simulation of the particle system is very time
consuming : for each discretization step and for each stock
particle, one has to make $N$ computations which yield a global
complexity of order $O(nN^2)$ where $n$ is the number of time steps
in the Euler scheme. Acceleration techniques are thus desirable.
One possible method consists in reducing the number of interactions
: instead of making $N$ computations for each estimation of the
conditional expectation, one can neglect interactions which involve
particles which are far away from each other. When the kernel used
is non increasing with the absolute value of its argument, the
easiest way to implement this idea is to sort the particles at each
step and, whenever a contribution of a particle is lower than some
fixed threshold, to stop the estimation of the conditional
expectation.

Of course, by doing this, we lose in precision for the same number
of interacting particles, especially for deep in/out of the money
strikes. But what we gain in terms of computation time is much more
important : in Figure~\ref{fig:volimp3}, we plot the implied
volatility obtained by the naive method and the method with the
above acceleration technique for the same number $N=10000$ of
particles. We take as threshold $\frac{1}{N}$ and set
$h_N=N^{-\frac{1}{10}}$ for the bandwidth parameter\footnote{In
order to smooth the estimation, one has to choose a bandwidth
parameter that is greater than the theoretical optimal parameter
$N^{-\frac{1}{5}}$.} and $n=20$ for the number of time steps in the
Euler scheme. The computation time, on a computer with a 2.8 Ghz
Intel Penthium 4 processor, is of 52 minutes for the naive method
and of 5 minutes for the accelerated one.

\begin{figure}[!ht]
\begin{center}
\leavevmode \epsfig{file=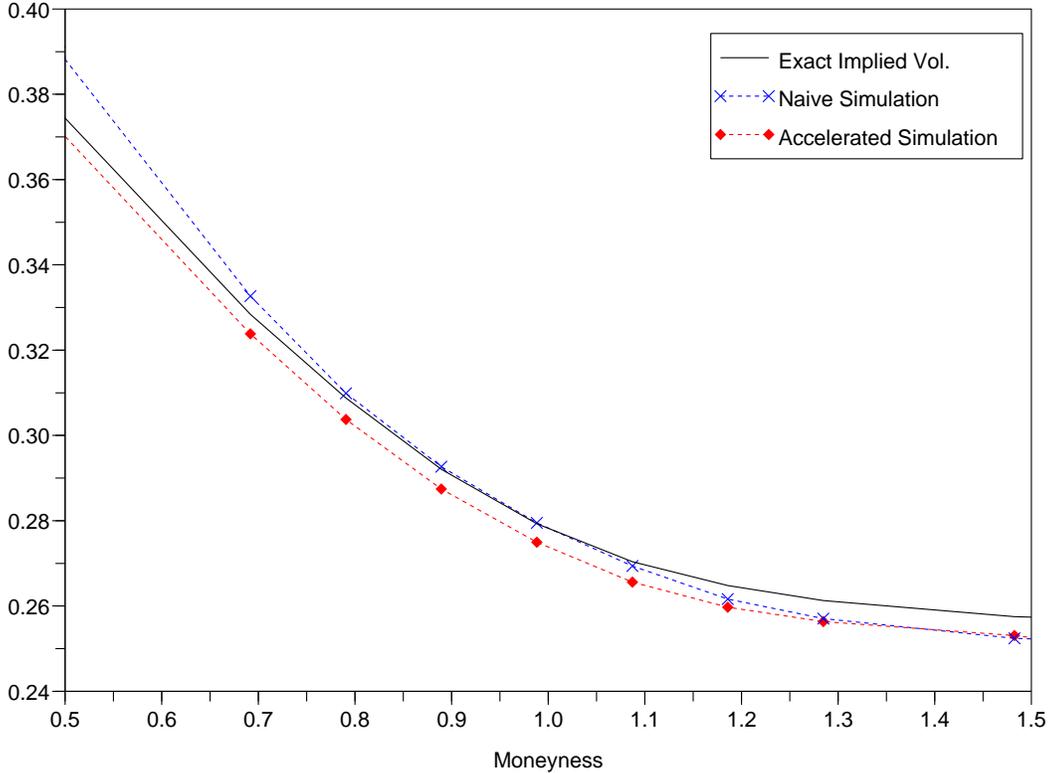,height=80ex}
\caption{Comparison between the naive technique and the accelerated
one for the same number $N=10000$ of particles.} \label{fig:volimp3}
\end{center}
\end{figure}

More importantly, we see that the implied volatility
$\widehat{\sigma}_{N}$ obtained by simulation of the system with $N$
interacting particles converges to the
exact implied volatility $\widehat{\sigma}_{exact}$ computed from quoted option
prices as $N$ tends to $\infty$ : see Figure
\ref{fig:volimp4} and Table \ref{tab:tab1}. With a reasonable number
of simulated paths, $N=200000$, the error on the implied volatility
remains clearly tolerable for practitioners (of the order of 10 bp)
except for a deep in the money call ($K=0.3 S_0$) where it attains
195 bp.

\begin{table}[!ht]
\begin{center}
\begin{tabular}{|c|c|c|c|c|c|c|c|c|c|c|c|}
\hline Moneyness ($\frac{K}{S_0}$) & 0.30 &0.49& 0.69 &0.79& 0.89&
0.99& 1.09& 1.19&
1.28 &1.48 &1.98\\
\hline Error :
$|\widehat{\sigma}_{N}-\widehat{\sigma}_{exact}|$&195& 36& 8& 5
& 2& 1& 2& 9&17&32&56\\
\hline
\end{tabular}
\end{center}
\caption{Error (in bp) on the implied volatility with $N=200000$
particles.} \label{tab:tab1}
\end{table}

\pagebreak

\begin{figure}[!ht]
\begin{center}
\leavevmode \epsfig{file=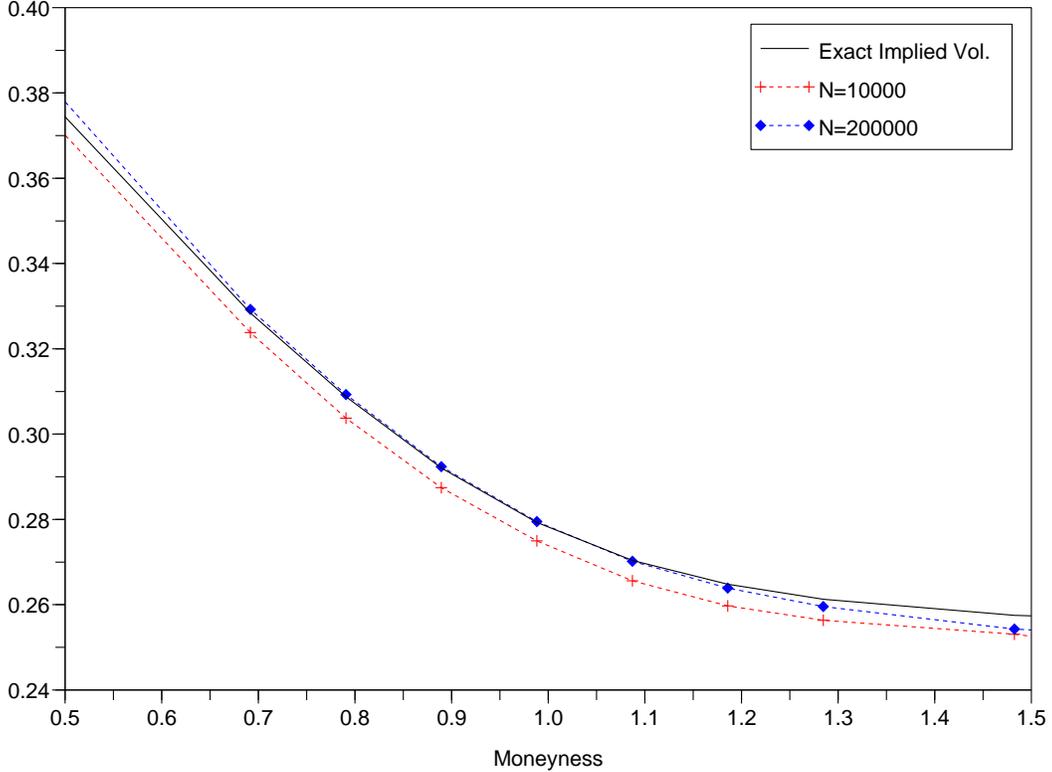,height=80ex}
\caption{Convergence of the implied volatility obtained with
non-parametric estimation as $N\to+\infty$.} \label{fig:volimp4}
\end{center}
\end{figure}

$\,$\\
\textbf{Independent particles}

\vspace{2mm} Unlike the parametric method, non-parametric estimation
of the conditional expectation gives the value of the intrinsic
volatility $\eta$ at the simulated points only. However, using an
interpolation technique, one can first reconstruct $\eta$ with $N_1$
dependent particles and then simulate $N_2$ independent paths of the $2$-dimensional stochastic
differential equation \eqref{modelsimp1stock}. By doing so, we speed up the simulations but one
has to choose carefully the size $N_1$ of the particle system in
order to have a reasonable estimation of the intrinsic volatility
and to tune the bandwidth parameter in order to smooth the
estimation (our numerical tests were done with $N_1=1000,
N_2=100000$ and $h_{N_1}=N_1^{-\frac{1}{10}}$). In
Figures~\ref{fig:volimp5} and~\ref{fig:volimp6}, we plot the local
volatility function $\sqrt{v_{loc}(t,x)}$ and the intrinsic volatility
function $\eta(t,x)$ of the stock. This latter is used to draw independent simulations of
the index along with the stock and we see in
Figure~\ref{fig:volimp7} that the implied volatility obtained is
close to the right one, especially near the money.

\begin{figure}[h!]
   \begin{minipage}[c]{.46\linewidth}
      \includegraphics[scale=0.65]{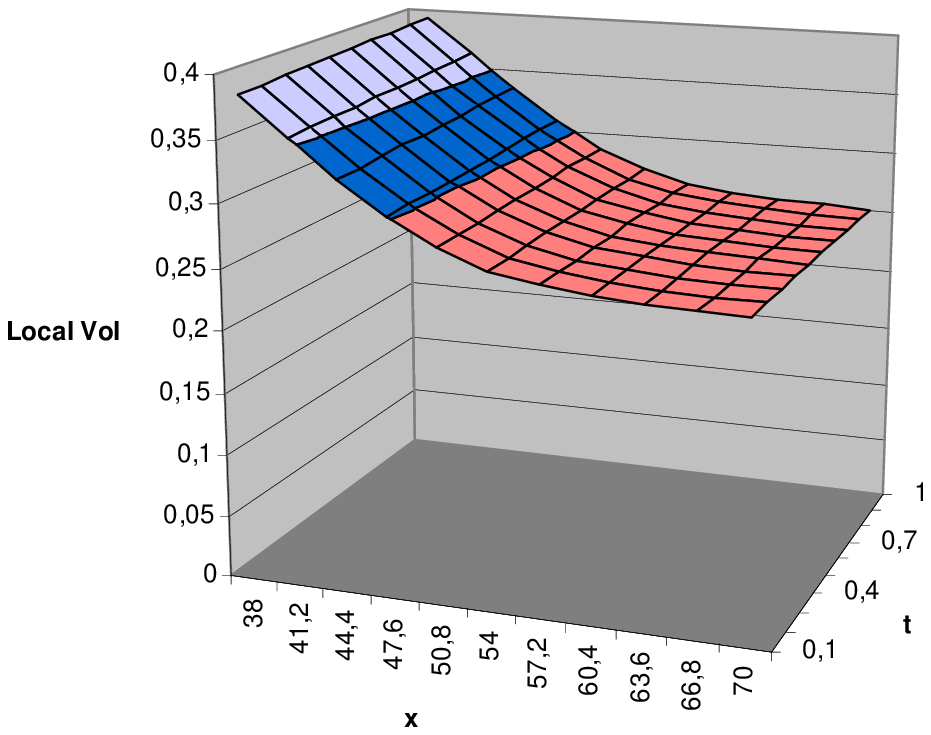}
      \caption{Local volatility function $\sqrt{v_{loc}(t,x)}$ of the stock.}
\label{fig:volimp5}
   \end{minipage} \hfill
   \begin{minipage}[c]{.46\linewidth}
      \includegraphics[scale=0.65]{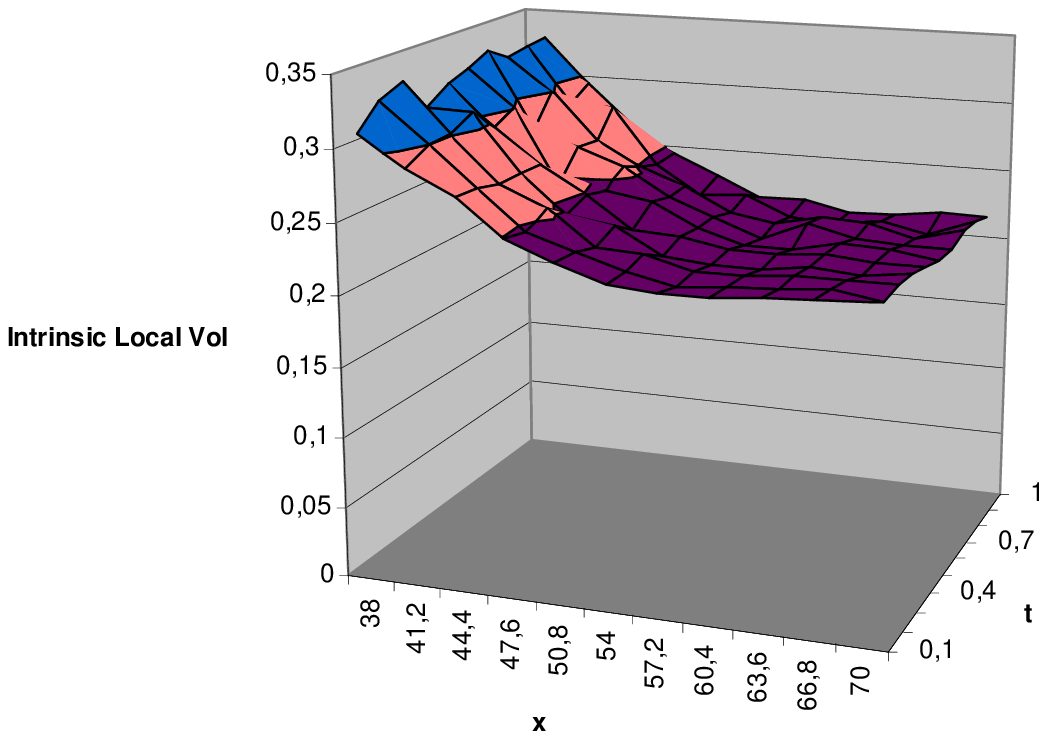}
      \caption{Intrinsic volatility function $\eta(t,x)$ of the stock.}
           \label{fig:volimp6}
   \end{minipage}

\end{figure}

\begin{figure}[h!]
\begin{center}
\leavevmode \epsfig{file=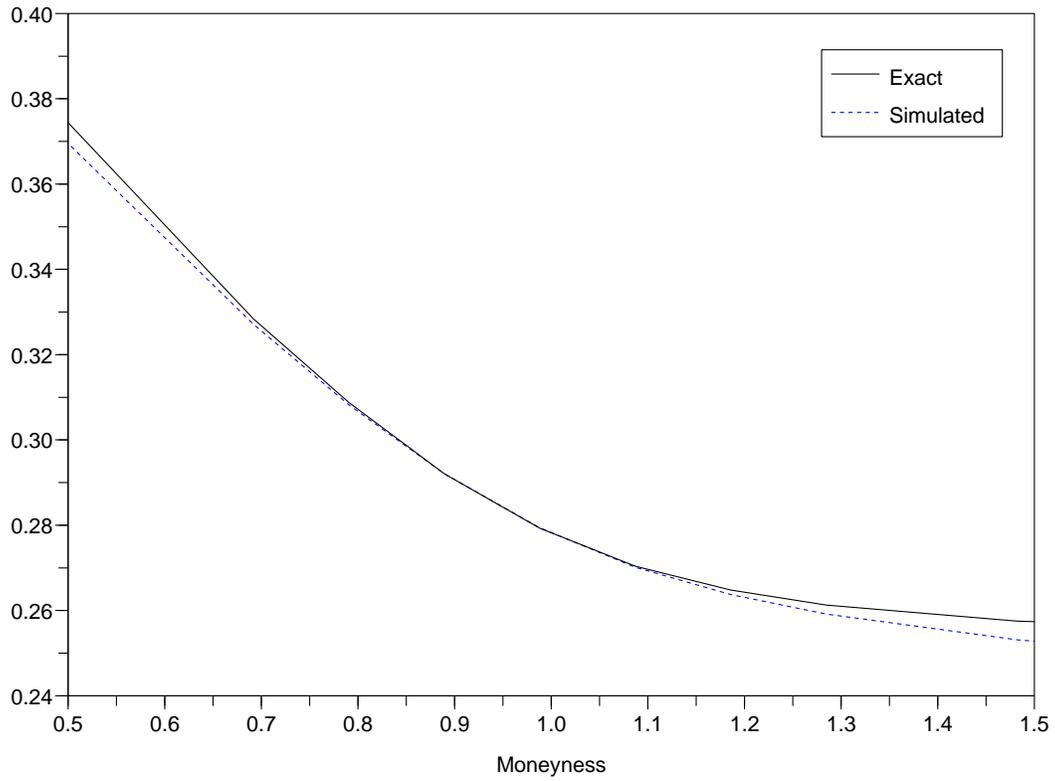,height=80ex}
\caption{Simulated implied volatility with independent draws.}
\label{fig:volimp7}
\end{center}
\end{figure}
\subsection{Original model}
We now turn to the calibration of our original model :
\begin{equation}
\forall j \in \{1,\dots,M\}, \quad
\frac{dS^{j,M}_t}{S^{j,M}_t}=(r-\delta_j) dt + \beta_j
\,\sigma(t,I^{M}_t) dB_t + \eta_j(t,S^{j,M}_t) dW^j_t
\mbox{ with }I^{M}_t=\sum_{i=1}^M w_i S^{i,M}_t.\label{original_model}
\end{equation}

It is rather complicated to have a perfect calibration
for both index and stocks within this framework. Nevertheless, Theorem
\ref{convind} ensures that the error of calibration of the index smile
is small (at least when the maturity is not too large)
when $\sigma$ is chosen as a local volatility function fitted to this
smile.
We also suppose that a
local volatility function $\sqrt{v_{loc}^j}$ has been fitted to the
market smile of each stock $j$. For the choice of the coefficients
$\beta_j$, we proceed like in Section \ref{choixbeta}. The coefficients
$\eta_j(t,x)=\sqrt{v_{loc}^j(t,x)-\beta_j^2\ee(\sigma^2(t,I^M_t)|S^{j,M}_t=x)}$
are then calibrated all at the same time using an adaptation of the
non-parametric method presented above based on the simulation of $N$
interacting $(M+1)$-dimensional paths.

In comparison with the simplified model, we introduce in the
calibration of the index a
small error which grows with the
maturity $T$. But we guarantee the additivity
constraint $I^{M}_t=\sum_{i=1}^M w_i S^{i,M}_t$. Note that a similar
error spoils the calibration of the
reconstructed index in the simplified model (see the discussion at the
end of Section \ref{choixbeta}).

In what follows, we
illustrate the effect of Theorems 1 and 2 and compare our models
with a constant correlation model.

\section{Illustration of Theorems 1 and 2 and comparison with a constant correlation model}
The objective of this section is to compare index and individual
stock smiles obtained with three different models : our original
model (\ref{original_model}), the simplified one (after letting
$M\to \infty$) and a model with constant correlation coefficient.
More precisely, we consider the following dynamics

\begin{enumerate}
\item The original model
\begin{equation}
\begin{array}{l}
\ds \forall j \in \{1,\dots,M\}, \quad
\frac{dS^{j,M}_t}{S^{j,M}_t}=rdt
+ \,\sigma(t,I^{M}_t) dB_t + \eta(t,S^{j,M}_t) dW^j_t\text{ with } I^{M}_t=\sum_{i=1}^M w_i S^{i,M}_t.
\end{array}
\end{equation}
\item The simplified model
\begin{equation}
\begin{array}{ll}
\ds \forall j \in \{1,\dots,M\}, &\ds \frac{dS^{j}_t}{S^{j}_t}=r dt
+ \sigma(t,I_t) dB_t + \eta(t,S^{j}_t) dW^j_t\\[5mm]
&\ds \frac{dI_t}{I_t}= r dt + \sigma(t,I_t) dB_t.
\end{array}
\end{equation}
Here we can also compute the reconstructed index
$\overline{I}^M_t=\sum_{i=1}^M w_i S^{i}_t$.
\item The "market" model
\begin{equation}
\forall j \in \{1,\dots,M\}, \frac{dS^{j}_t}{S^{j}_t}=r dt +
\sqrt{v_{loc}(t,S^j_t)} d\widetilde{W}^j_t
\end{equation}
with, $\forall i \neq j,
\,d\!<\widetilde{W}^i,\widetilde{W}^j>_t=\rho \,dt$.
\end{enumerate}
We deliberately dropped the dividend yields and the beta
coefficients in order to simplify the numerical experiment. For the
function $\sigma$, we take as previously the
calibrated local volatility of the Eurostoxx. For $\eta$, which does not depend on $j$, we choose an arbitrary
function of the forward moneyness and we evaluate $v_{loc}$ such that
the ``market'' model and the simplified model yield the same implied
volatility for individual stocks. According to \cite{Gyongy}, it is enough to take
\[v_{loc}(t,x)=\eta^2(t,x)+\ee(\sigma^2(t,I_t) | S^1_t=x)\]
where the conditional expectation is approximated using the
non-parametric method presented above.

Finally, we fix the correlation coefficient $\rho$ such that the
market model and the simplified one have the same ATM implied
volatility for the index.

The implied volatilities for the index and for an individual stock
obtained by the three models are plotted in Figures
\ref{fig:IndexCompare} and \ref{fig:StockCompare}. We also give the
difference in basis points between the implied volatilities obtained
with the simplified model and the original one in Tables
\ref{tab:theo1}, \ref{tab:theo21} and \ref{tab:theo22}. The
parameters we use in our numerical experiment are the following :
\begin{itemize}
\item[-]$S_0^1=\dots=S_0^M=53$,
\item[-]$M$, $I_0$ and the weights $w_1, \dots, w_M$ : the same as of the Eurostoxx index at December 21, 2007,
\item[-]$r=0.045$,
\item[-]Maturity $T=1$ year,
\item[-]Number of time steps: $n=10$,
\item[-]Number of simulated paths : $N=100000$.
\end{itemize}
\begin{figure}[!ht]
\begin{center}
\leavevmode \epsfig{file=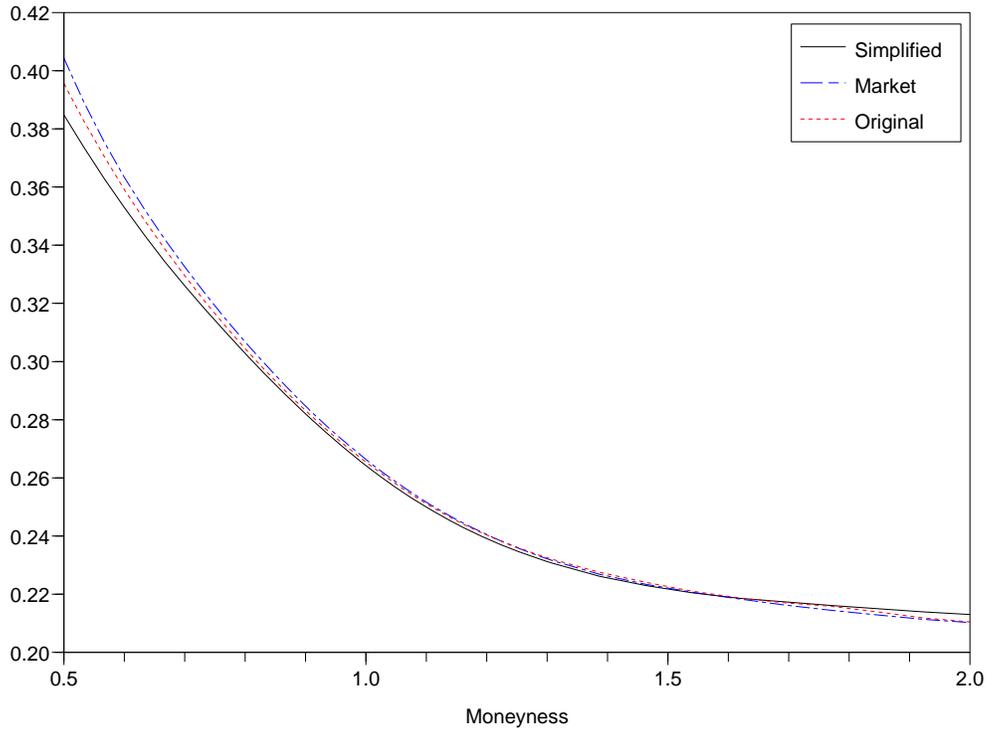,height=75ex} \caption{Implied
volatility of an individual stock.} \label{fig:StockCompare}
\end{center}
\end{figure}

\begin{table}[!ht]
\begin{center}
\begin{tabular}{|c|c|c|c|c|c|c|c|c|c|c|c|c|}
\hline Moneyness ($\frac{K}{S_0}$) & 0.5 &0.8& 0.9 &0.95& 1& 1.05&
1.1& 1.2&
1.3 &1.55 &1.85&2\\
\hline
$|\widehat{\sigma}_{simplified}-\widehat{\sigma}_{original}|$&81
&22& 16 &14 &14 &17 &20& 24& 24& 11& 38& 17\\
\hline
\end{tabular}
\end{center}
\caption{Difference (in bp) the implied
volatilities of an individual stock obtained with the simplified model and with the original model.} \label{tab:theo22}
\end{table}

\begin{figure}[!ht]
\begin{center}
\leavevmode \epsfig{file=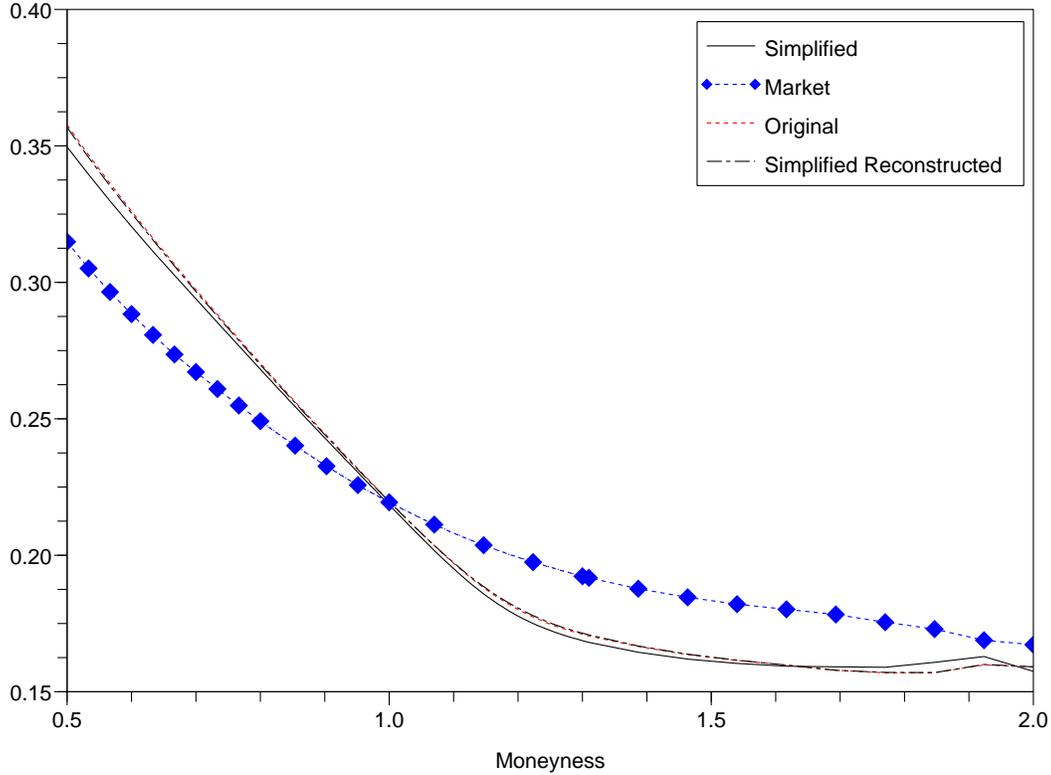,height=80ex} \caption{Implied
volatility of the index.} \label{fig:IndexCompare}
\end{center}
\end{figure}

\begin{table}[!ht]
\begin{center}
\begin{tabular}{|c|c|c|c|c|c|c|c|c|c|c|c|c|}
\hline Moneyness ($\frac{K}{I_0}$) & 0.5 &0.8& 0.9 &0.95& 1& 1.05&
1.1& 1.2&
1.3 &1.55 &1.85&2\\
\hline
$|\widehat{\sigma}_{simplified}-\widehat{\sigma}_{original}|$&81
&22& 16 &14 &14 &17 &20& 24& 24& 11& 38& 17\\
\hline
\end{tabular}
\end{center}
\caption{Difference (in bp) between the implied volatilities of the index
obtained with the simplified model and with the
original model.} \label{tab:theo1}
\end{table}

\begin{table}[!ht]
\begin{center}
\begin{tabular}{|c|c|c|c|c|c|c|c|c|c|c|c|c|}
\hline Moneyness ($\frac{K}{I_0}$) & 0.5 &0.8& 0.9 &0.95& 1& 1.05&
1.1& 1.2&
1.3 &1.55 &1.85&2\\
\hline
$|\widehat{\sigma}_{reconstruct}-\widehat{\sigma}_{original}|$&10&
5&
4& 3& 2& 1& 2& 5& 4 &1& 0& 0\\
\hline
\end{tabular}
\end{center}
\caption{Difference (in bp) between the implied volatility of the
reconstructed index $\overline{I}^M$ in the simplified model and the
implied volatility of the index in the original model.}
\label{tab:theo21}
\end{table}

As suggested by Theorems 1 and 2, we see that the original model and
the simplified one yield implied volatility curves that are very
close to each other, both for the index and for individual stocks.
The difference in basis points between the implied volatilities is
reasonable, especially between the reconstructed index in the simplified model and
the index
in the original model.

Concerning the market model, by construction, we have the same
implied volatility for an individual stock as in the simplified
model but the implied volatility of the index is far from
the simplified one.
This phenomenon is well known in practice (see
\cite{Bakshi},\cite{Bollen} or\cite{Branger}) : the implied
volatility smile of an index is much steeper than the implied
volatility smile of an individual stock. The market model of
constantly correlated local volatility dynamics for the stocks is unable to retrieve the shape of the
index smile. A more sophisticated dependence structure between stocks
is needed. Local correlation
models provide an extension of the market model in this direction : the correlation at time $t$ between the Brownian motions
driving the local volatility dynamics of the stocks is a function
$\rho(t,I_t)$ of the index level. But the way this function $\rho$
influences the index smile is not clear at all. Somehow, our models
provide another parametrization of the correlation structure in which,
the function $\sigma$, that replaces the function $\rho$, can be
interpreted as the local volatility of the index. Yet, the
individual stocks can still be properly calibrated.
$$\;$$\textbf{Application: Pricing of a worst-of option}$$\;$$
Apart from handling both the index and its composing stocks, our
models are also relevant for the widespread financial products that
are sensitive to correlation in the equity world, such as rainbow
options.

One example of such products is the worst-of performance option
whose payout is referenced to the worst performer in a basket of
shares. For a basket of $M$ shares, the payoff of a call with strike
$K$ and maturity $T$ writes $\ds\left(\min_{1\leq i\leq
M}\frac{S_T^i}{S_0^i}-K\right)_+$. Our objective is to compare the
prices obtained by our model to the prices obtained by the market
model of constantly correlated stocks. The parameters of the
numerical experiment are the same as previously and we set the
correlation coefficient $\rho$ such that all the models exhibit the
same ATM implied volatility for the index.

The result, as can be seen in Figure~\ref{fig:worst_of}, is that our
prices are always lower than the market model price, especially in
the money. Hence, a model with a constant correlation coefficient,
calibrated in order to fit the at the money prices of options written on the
index, will always
overestimate the risks of worst-of options. The reason is that the
correlation level needed to fit the at the money prices is very high. Note that the prices
obtained with the original model and the simplified one are barely
distinguishable from each other.

\clearpage

\begin{figure}[h]
\begin{center}
\leavevmode \epsfig{file=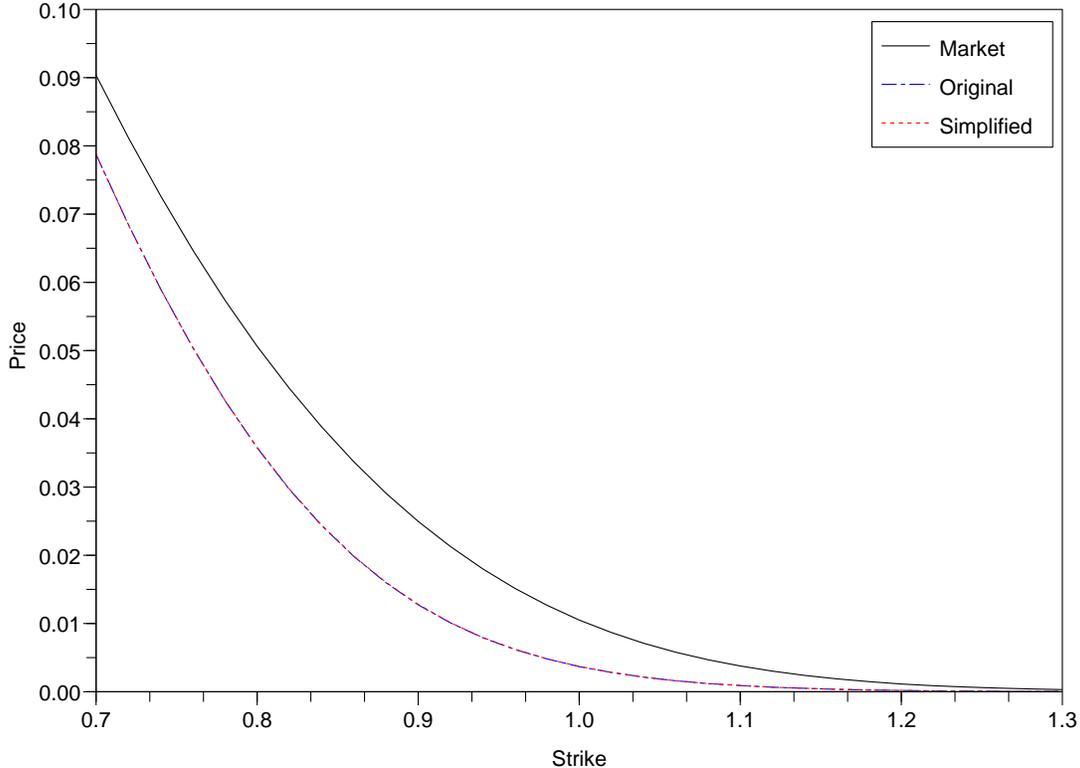,height=80ex} \caption{Worst-of
price.} \label{fig:worst_of}
\end{center}
\end{figure}

\section{Conclusion}
In this paper, we have introduced a new model for describing the
joint evolution of an index and its composing stocks. The idea
behind our view is that an index is not only a weighted sum of
stocks but can also be seen as a market factor that influences their
dynamics. In order to have a more tractable model, we have studied
the limit when the number of underlying stocks goes to infinity and
we have shown that our model reduces to a local volatility model for
the index and to a stochastic volatility model with volatility driven by the index for each individual
stock. We have discussed calibration issues and proposed a
simulation-based technique for the calibration of the stock
dynamics, which permits us to fit both index and stocks smiles. The
numerical results obtained on real data for the Eurostoxx index are
very encouraging, especially for accelerated techniques. We have
also compared our models (before and after passing to the limit) to
a standard market model consisting of local volatility models for
the stocks which are constantly correlated and we have seen that they
lead to a steeper index smile. Finally,
when considering the pricing of worst-of performance options, which
are sensitive to the dependence structure between stocks, we have
found that our prices are more aggressive than the prices obtained
by the standard market model.

To sum up, we list some properties of our models depending on the
options one wishes to handle in the Table below

\begin{table}[!ht]
\begin{center}
\begin{tabular}{|l|l|l|}
\hline  \hspace{8mm} Purpose & \hspace{18mm} Simplified model & \hspace{18mm} Original model \\
\hline \small{Options written on }&
-\small{Simulation of a $(J+1)$-dimensional SDE :}&-\small{Simulation of an $M$-dimensional SDE :}\\
-\small{few ($J<<M$) stocks}&\small{$\,\,(I,S^1,\dots,S^J)$.}&\small{$\,\,(S^{1,M},\dots,S^{M,M})$.}\\
-\small{the index.}&-\small{Exact calibration of $(S^j)_{1\leq j\leq
J}$ and $I$
possible.}&-\small{Exact calibration of $(S^{j,M})_{1\leq j\leq J}$
possible}\\

& &\small{but requires calibration of all the stocks.}\\
&&-\small{Approximate calibration of $I^M$.}\\

\hline \small{Options written on }&
-\small{Simulation of an $(M+1)$-dimensional SDE :}&-\small{Simulation of an $M$-dimensional SDE :}\\
-\small{all the
stocks}&\small{$\,\,(I,S^1,\dots,S^M)$.}&\small{$\,\,(S^{1,M},\dots,S^{M,M})$.}\\
-\small{the index.}&-\small{Exact calibration of all the stocks
possible.}&-\small{Exact calibration of all the stocks
possible.}\\
&-\small{Index value : $\overline{I}^M_t=\sum_{j=1}^M w_j S^j_t$}.&-\small{Approximate calibration of $I^M$.}\\
&-\small{Approximate calibration of $\overline{I}^M$.}&\\ \hline
\end{tabular}
\end{center}
\caption{Which model to use and when.} \label{tab:summary}
\end{table}

\pagebreak

\bibliographystyle{plain}

\begin{thebibliography}{10}

\bibitem{Avellaneda}
M.~Avellaneda, D.~Boyer-Olson, J.~Busca, and P.~Friz.
\newblock Reconstructing volatility.
\newblock {\em Risk}, pages 87--91, October 2002.

\bibitem{Bakshi}
G.~Bakshi, N.~Kapadia, and D.~Madan.
\newblock Stock return characteristics, skew laws, and the differential pricing
  of individual equity options.
\newblock {\em Review of Financial Studies}, 16:101--143, 2003.

\bibitem{Berestycki2}
H.~Berestycki, J.~Busca, and I.~Florent.
\newblock An inverse parabolic problem arising in finance.
\newblock {\em Comptes Rendus de l'Acad\'emie des Sciences. S\'erie I.
  Math\'ematique}, 331(12):965--969, 2000.

\bibitem{Berestycki1}
H.~Berestycki, J.~Busca, and I.~Florent.
\newblock Asymptotics and calibration of local volatility models.
\newblock {\em Quantitative Finance}, 2(1):61--69, 2002.

\bibitem{Bollen}
N.P.B. Bollen and R.E. Whaley.
\newblock Does net buying pressure affect the shape of implied volatility
  functions?
\newblock {\em Journal of Finance}, 59(2):711--753, 04 2004.

\bibitem{Bosq}
D.~Bosq.
\newblock {\em Nonparametric statistics for stochastic processes}, volume 110
  of {\em Lecture Notes in Statistics}.
\newblock Springer-Verlag, New York, second edition, 1998.
\newblock Estimation and prediction.

\bibitem{Branger}
N.~Branger and C.~Schlag.
\newblock Why is the index smile so steep?
\newblock {\em Review of Finance}, 8(1):109--127, 2004.

\bibitem{Bouchaud1}
P.~Cizeau, M.~Potters, and J-P. Bouchaud.
\newblock Correlation structure of extreme stock returns.
\newblock {\em Quantitative Finance}, 1(2):217--222, February 2001.

\bibitem{Dermoune}
A.~Dermoune.
\newblock Propagation and conditional propagation of chaos for pressureless gas
  equations.
\newblock {\em Probability Theory and Related Fields}, 126(4):459--476, 2003.

\bibitem{dupire}
B.~Dupire.
\newblock Pricing with a smile.
\newblock {\em Risk}, pages 18--20, January 1994.

\bibitem{Gyongy}
I.~Gy{\"o}ngy.
\newblock Mimicking the one-dimensional marginal distributions of processes
  having an {I}t\^o differential.
\newblock {\em Probability Theory and Related Fields}, 71(4):501--516, 1986.

\bibitem{KaratzasShreve}
I.~Karatzas and S.E. Shreve.
\newblock {\em Brownian motion and stochastic calculus}.
\newblock Springer-Verlag New-York, second edition, 1991.

\bibitem{Lee}
P.~Lee, L.~Wang, and A.~Kerim.
\newblock Index volatility surface via moment-matching techniques.
\newblock {\em Risk}, pages 85--89, December 2003.

\bibitem{LongstaffSchwartz}
F.A. Longstaff and E.S. Schwartz.
\newblock {Valuing American options by simulation: a simple least-squares
  approach}.
\newblock {\em Review of Financial Studies}, 14(1):113--147, 2001.

\bibitem{Meleard}
S.~M{\'e}l{\'e}ard.
\newblock Asymptotic behaviour of some interacting particle systems;
  {M}c{K}ean-{V}lasov and {B}oltzmann models.
\newblock In {\em Probabilistic models for nonlinear partial differential
  equations ({M}ontecatini {T}erme, 1995)}, volume 1627 of {\em Lecture Notes
  in Math.}, pages 42--95. Springer, Berlin, 1996.

\bibitem{nadaraya}
E.A. Nadaraya.
\newblock On estimating regression.
\newblock {\em Theory of Probability and its Applications}, 9(1):141--142,
  1964.

\bibitem{sharpe}
W.F. Sharpe.
\newblock Capital asset prices: A theory of market equilibrium under conditions
  of risk.
\newblock {\em The Journal of Finance}, 19(3):425--442, 1964.

\bibitem{Sznitman}
A-S. Sznitman.
\newblock Topics in propagation of chaos.
\newblock In {\em \'{E}cole d'\'{E}t\'e de {P}robabilit\'es de {S}aint-{F}lour
  {XIX}---1989}, volume 1464 of {\em Lecture Notes in Math.}, pages 165--251.
  Springer, Berlin, 1991.

\bibitem{TalayVaillant}
D.~Talay and O.~Vaillant.
\newblock A stochastic particle method with random weights for the computation
  of statistical solutions of {M}c{K}ean-{V}lasov equations.
\newblock {\em The Annals of Applied Probability}, 13(1):140--180, 2003.

\bibitem{Watson}
G.S. Watson.
\newblock Smooth regression analysis.
\newblock {\em Sankhy\=a (Statistics). The Indian Journal of Statistics. Series
  A}, 26:359--372, 1964.

\end{thebibliography}
\addcontentsline{toc}{section}{Bibliography}

\pagebreak
\begin{center}
\textbf{\large Appendix}
\end{center}

\vspace{5mm}

In order to prove the Theorems 1 and 2, we need the following
technical estimation

\begin{lemma}
\label{lem:moments} Under assumption ($\cH$\ref{hyp:bornitude}), for
all $p \geq 1$, one has
\begin{equation}
\forall j \in \{1,\dots,M\},\quad  \sup_{0\leq t\leq
T}\ee\left(|S^{j,M}_{t}|^{2p}\right) \leq  C_p
\end{equation}
where $\ds C_p=\max_{1\leq j \leq M}|S^{j,M}_{0}|^{2p}
\exp\left(\left(2r+(2p-1)(\max_{j \geq
1}\beta_j^2+1)K_b^2\right)pT\right)$.
\end{lemma}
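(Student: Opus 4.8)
The plan is to apply Itô's formula to $(S^{j,M}_t)^{2p}$ and turn the resulting drift into a linear Gronwall inequality for $t\mapsto\ee\big((S^{j,M}_t)^{2p}\big)$. Since the stock SDE is linear in $S^{j,M}$ with purely multiplicative noise and, by ($\cH$\ref{hyp:bornitude}), bounded coefficients, the strong solution provided by ($\cH$\ref{hyp:LIP}) stays a.s.\ positive on $[0,T]$, so $|S^{j,M}_t|^{2p}=(S^{j,M}_t)^{2p}$ and the smooth function $x\mapsto x^{2p}$ may legitimately be used. Because $B$ and $W^j$ are independent, $d\langle S^{j,M}\rangle_t=(S^{j,M}_t)^2\big(\beta_j^2\sigma^2(t,I^M_t)+\eta_j^2(t,S^{j,M}_t)\big)\,dt$, and Itô's formula gives
\[
d\big((S^{j,M}_t)^{2p}\big)=(S^{j,M}_t)^{2p}\Big(2p(r-\delta_j)+p(2p-1)\big(\beta_j^2\sigma^2(t,I^M_t)+\eta_j^2(t,S^{j,M}_t)\big)\Big)dt+dN_t,
\]
where $N$ is a continuous local martingale.

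Next I would localize. Set $\tau_n=\inf\{t\geq 0:S^{j,M}_t\geq n\}$. On $[0,\tau_n]$ the process is bounded by $n$ and, $\sigma$ and $\eta_j$ being bounded by $K_b$, the stopped stochastic integrals in $N$ have bounded integrands, hence are true martingales; taking expectations after stopping at $\tau_n$ removes them. Using $\delta_j\geq 0$ and ($\cH$\ref{hyp:bornitude}) to bound the bracket in the drift by the constant $c:=p\big(2r+(2p-1)(\max_{j\geq 1}\beta_j^2+1)K_b^2\big)$, together with $(S^{j,M})^{2p}\geq 0$, I obtain
\[
\ee\big((S^{j,M}_{t\wedge\tau_n})^{2p}\big)\leq |s^j_0|^{2p}+c\int_0^t \ee\big((S^{j,M}_{s\wedge\tau_n})^{2p}\big)\,ds .
\]
Gronwall's lemma yields $\ee\big((S^{j,M}_{t\wedge\tau_n})^{2p}\big)\leq |s^j_0|^{2p}e^{ct}\leq |s^j_0|^{2p}e^{cT}$ uniformly in $n$ (here $c\geq 0$, as is the case e.g.\ when $r\geq 0$, is what lets one replace $e^{ct}$ by $e^{cT}$). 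Since $S^{j,M}$ does not explode on $[0,T]$, $\tau_n\to\infty$ a.s., so Fatou's lemma gives $\ee\big((S^{j,M}_t)^{2p}\big)\leq |s^j_0|^{2p}e^{cT}\leq C_p$ for every $t\in[0,T]$, which is the claim (the maximum over $j$ of $|s^j_0|^{2p}=|S^{j,M}_0|^{2p}$ appearing in $C_p$).

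An alternative route leading to the same constant is to use the explicit representation $S^{j,M}_t=s^j_0\exp(\cdots)$ and recognise $(S^{j,M}_t)^{2p}$ as $|s^j_0|^{2p}$ times a Doléans exponential martingale (a genuine martingale by Novikov, the integrands being bounded) multiplied by the deterministic-looking factor $\exp\!\big(2p(r-\delta_j)t+p(2p-1)\int_0^t(\beta_j^2\sigma^2(s,I^M_s)+\eta_j^2(s,S^{j,M}_s))ds\big)$; taking expectations and bounding this factor with ($\cH$\ref{hyp:bornitude}) and $\delta_j\geq 0$ finishes the proof. Either way the computation is routine; the only point that deserves to be written out with care is the localization step, i.e.\ the justification that the martingale contribution has zero expectation and that one may pass to the limit $n\to\infty$ — this is the only mild obstacle and is the part I would actually spell out.
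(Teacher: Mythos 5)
Your proof is correct and follows essentially the same route as the paper: Itô applied to the $2p$-th power, localization by the exit time of $\{|S^{j,M}|<n\}$ so the martingale part vanishes in expectation, the boundedness hypothesis ($\cH$\ref{hyp:bornitude}) together with $\delta_j\geq 0$ to bound the drift by a constant multiple of $(S^{j,M})^{2p}$, Gronwall's lemma, and Fatou to pass $n\to\infty$. The remark that one may work with $(S^{j,M}_t)^{2p}$ rather than $|S^{j,M}_t|^{2p}$ by positivity, and the alternative Doléans-exponential argument, are fine but unnecessary since $x\mapsto|x|^{2p}$ is already $C^2$ for $p\geq 1$; otherwise the two proofs are the same.
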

\begin{proof}
By Itô's lemma one has
\[\begin{array}{rcl}
\ds|S^{j,M}_{t}|^{2p}&=&\ds|S^{j,M}_{0}|^{2p}+\int_0^t|S^{j,M}_{s}|^{2p}((2p)(r-\delta_j)+p(2p-1)(\beta_j^2\sigma^2(s,I^M_s)+\eta_j^2(s,S^{j,M}_{s})))ds\\[4mm]
&&\ds\quad+\int_0^t(2p)|S^{j,M}_{s}|^{2p}(\beta_j\sigma(s,I^M_s)dB_s+\eta_j(s,S^{j,M}_{s})dW^j_s)\end{array}\]
In order to get rid of the stochastic integral, we use a
localization technique : let $\nu_n$ be the stopping time defined
for each $n\in \nn$ by $\nu_n:=\inf\{t \geq 0; |S^{j,M}_t| \geq
n\}$. Then, using ($\cH$\ref{hyp:bornitude}), one has
\[\begin{array}{rcl}
\ds \ee\left(|S^{j,M}_{t\wedge \nu_n}|^{2p}\right)\!\!&=&\ds
|S^{j,M}_{0}|^{2p}+\ee\left(\int_0^{t\wedge
\nu_n}\!\!|S^{j,M}_{s}|^{2p}
((2p)(r-\delta_j)+p(2p-1)(\beta_j^2\sigma^2(s,I^M_s)+\eta_j^2(s,S^{j,M}_{s}))ds\right)\\[5mm]
&\leq &\ds
|S^{j,M}_{0}|^{2p}+\left((2p)(r-\delta_j)\mathbb{1}_{\{r-\delta_j
\geq0\}}+p(2p-1)(\beta_j^2+1)K_b^2\right)\int_0^{t}\ee\left(|S^{j,M}_{s\wedge
\nu_n}|^{2p}\right)ds
\end{array}\]
So, by Gronwall's lemma and the fact that the dividends are
nonnegative,
\begin{equation}
\forall t\leq T, \ee\left(|S^{j,M}_{t\wedge \nu_n}|^{2p}\right) \leq
|S^{j,M}_{0}|^{2p}
\exp\left(\left(2rp+p(2p-1)(\beta_j^2+1)K_b^2\right)T\right)\end{equation}

Finally, Fatou's lemma permits us to conclude :
\begin{equation}
\sup_{0\leq t\leq T} \ee\left(|S^{j,M}_{t}|^{2p}\right) \leq
|S^{j,M}_{0}|^{2p}
\exp\left(\left(2rp+p(2p-1)(\beta_j^2+1)K_b^2\right)T\right).\end{equation}
\end{proof}

\begin{proofOFT1}
Using the SDEs (\ref{indexSDE}) and (\ref{Ilim}), one has
\[ \begin{array}{rcl}
 \ds  |I_{t}^{M}-I_{t}|^{2p}&=&\ds \big|r \int_{0}^{t}
  \left(I^{M}_{s}-I_{s}\right)ds-\int_{0}^{t} \left(\sum_{j=1}^M \delta_j w_j S^{j,M}_s-\delta I_{s}\right)ds\\[3mm]
  &&\ds +\int_{0}^{t}\left(\sum_{j=1}^M \beta_j w_j S^{j,M}_s\sigma(s,I^{M}_{s})-\beta
  I_{s}\sigma(s,I_{s})\right)dB_{s}+ \sum_{j=1}^{M} w_{j} \int_{0}^{t}S^{j,M}_{s}\eta_{j}(s,S^{j,M}_{s}) dW_{s}^{j}\big|^{2p}\\[7mm]
  &\leq&\ds 4^{2p-1} \left(r^{2p}t^{2p-1}\int_{0}^{t}(I^{M}_{s}-I_{s})^{2p}ds+t^{2p-1}\int_{0}^{t}\left(\sum_{j=1}^M \delta_j w_j S^{j,M}_s-\delta I_{s}\right)^{2p}ds\right.\\[3mm]
  &&\ds +\left.\Big|\!\int_{0}^{t}\left(\sum_{j=1}^M \beta_j w_j S^{j,M}_s\sigma(s,I^{M}_{s})-\beta
  I_{s}\sigma(s,I_{s})\right)
  dB_{s}\Big|^{2p}+ \Big|\sum_{j=1}^{M}
    w_{j}\!\int_{0}^{t}S^{j,M}_{s}\eta_{j}(s,S^{j,M}_{s})
    dW_{s}^{j}\Big|^{2p}\right)
      \end{array}
  \]
Hence, using the Burkholder-Davis-Gundy inequality (see Karatzas and
Shreve~\cite{KaratzasShreve} p. 166), there exists a universal
positive constant $K_p$ such that
\[\ee\left(\sup_{0\leq t \leq T}|I^{M}_t-I_t|^{2p}\right) \leq  4^{2p-1}(a_{M}+b_{M}+c_{M}+d_{M})
  \]
  where
  \begin{itemize}
    \item $\ds a_{M}=r^{2p} \,T^{2p-1} \int_{0}^{T} \ee\big((I^{M}_{s}-I_{s})^{2p}\big)ds$
    \item $\ds b_{M}=T^{2p-1}\int_{0}^{T} \ee\left(\left(\sum_{j=1}^M \delta_j w_j S^{j,M}_s-\delta I_{s}\right)^{\!\!\!2p}\,\right)ds$
    \item $\ds c_{M}=K_p T^{p-1} \int_{0}^{T}\ee\left(\left(\sum_{j=1}^M \beta_j w_j S^{j,M}_s\sigma(s,I^{M}_{s})-\beta
  I_{s}\sigma(s,I_{s})\right)^{\!\!\!2p}\,\right) ds$
    \item $\ds d_{M}=K_p T^{p-1} \int_{0}^{T}\ee\left(\left(\sum_{j=1}^M \left(w_j S^{j,M}_s\eta_j(s,S^{j,M}_{s})\right)^2\right)^{\!\!\!p}\,\right) ds$
  \end{itemize}

The term $a_{M}$ is the easiest one to handle :
\begin{equation}
a_{M} \leq r^{2p} \,T^{2p-1}\int_{0}^{T}\ee\left(\sup_{0 \leq u \leq
    s}|I^{M}_{u}-I_{u}|^{2p}\right) ds.\label{aM}
\end{equation}

Next, using assumption ($\cH$\ref{hyp:bornitude}) for the first
inequality, Hölder's inequality for the second and lemma
\ref{lem:moments} for the third, one gets
\begin{equation}
\begin{array}{rcl}
d_{M}&=&\ds K_p T^{p-1} \int_0^T \sum_{j_1=1}^{M} \cdots
\sum_{j_p=1}^{M} \ee\left(\prod_{k=1}^p w_{j_k}^2
(S^{j_k,M}_{s})^2(\eta_{j_k}(s,S^{j_k,M}_{s}))^2\right) ds\\[5mm]
&\leq &\ds K_p K_b^{2p}T^{p-1} \int_0^T \sum_{j_1=1}^{M} \cdots
\sum_{j_p=1}^{M} (\prod_{k=1}^p w_{j_k}^2) \ee\left(\prod_{k=1}^p
(S^{j_k,M}_{s})^2\right) ds\\[5mm]
&\leq &\ds K_p K_b^{2p}T^{p-1} \int_0^T \sum_{j_1=1}^{M} \cdots
\sum_{j_p=1}^{M} \prod_{k=1}^p w_{j_k}^2 \left(\ee\left(
(S^{j_k,M}_{s})^{2p}\right)\right)^{\frac{1}{p}} ds\\[5mm]
&\leq &\ds K_p K_b^{2p}T^{p} C_p \left(\sum_{j=1}^{M}
w_{j}^2\right)^{\!\!p}
\end{array} \label{dM}
\end{equation}

The same arguments enable us to control the term $b_M$ :

\begin{equation}\begin{array}{rcl} \ds b_M&=&\ds T^{2p-1}\int_{0}^{T} \ee\left(\left(\sum_{j=1}^M \delta_j w_j S^{j,M}_s-\delta I_{s}\right)^{\!\!\!2p}\,\right)ds\\[5mm]
  &\leq& (2T)^{2p-1}\ds \left(\int_{0}^{T} \ee\left(\left(\sum_{j=1}^M \delta_j w_j S^{j,M}_s-\delta I^M_{s}\right)^{\!\!\!2p}\,\right)
   + \ee\left(\left(\delta I^{M}_{s}-\delta I_{s}\right)^{2p}\right)ds\right)\\[5mm]
  &\leq&(2T)^{2p-1} \ds \int_{0}^{T} \ee\left(\left(\sum_{j=1}^M
  (\delta_j-\delta)
w_j S^{j,M}_s\right)^{2p}\right)ds +(2T)^{2p-1}
\delta^{2p}\int_{0}^{T} \ee\left(\sup_{0 \leq u \leq
    s}|I^{M}_{u}-I_{u}|^{2p}\right) ds\\[5mm]
  &\leq&\ds 2^{2p-1} T^{2p}C_p\left(\sum_{j=1}^M w_j|\delta_j-\delta|\right)^{2p}+(2T)^{2p-1}
\delta^{2p}\int_{0}^{T} \ee\left(\sup_{0 \leq u \leq
    s}|I^{M}_{u}-I_{u}|^{2p}\right) ds.
\end{array}\label{bM}\end{equation}

For the remaining term $c_M$, we will also need the Lipschitz
assumption ($\cH$\ref{hyp:xsLip})
\begin{equation}\begin{array}{rcl}
  \ds c_{M} &=&\ds K_p T^{p-1} \int_{0}^{T}\ee\left(\left(\sum_{j=1}^M \beta_j w_j S^{j,M}_s\sigma(s,I^{M}_{s})-\beta
  I_{s}\sigma(s,I_{s})\right)^{\!\!\!2p}\,\right) ds\\[5mm]
&\leq& \ds 2^{2p-1} K_p T^{p-1} \left( \int_{0}^{T}
\ee\left(\left(\sum_{j=1}^M (\beta_j-\beta) w_j
S^{j,M}_s\sigma(s,I^{M}_{s})\right)^{\!\!\!2p}\,\right)
+\ee\left((\beta I^{M}_{s}\sigma(s,I^{M}_{s})-\beta
    I_{s}\sigma(s,I_{s}))^{2p}\right)ds\right)\\[5mm]
  &\leq&\ds 2^{2p-1} K_pT^{p}K_b^{2p}C_p\left(\sum_{j=1}^M
  w_j|\beta_j-\beta|\right)^{2p}+2^{2p-1}K_pT^{p-1}(\beta K_\sigma)^{2p}
\int_{0}^{T} \ee\left(\sup_{0 \leq u \leq
    s}|I^{M}_{u}-I_{u}|^{2p}\right) ds.
\end{array}\label{cM}\end{equation}

So, combining the inequalities (\ref{aM}), (\ref{dM}), (\ref{bM})
and (\ref{cM}), one obtains
\[\begin{array}{rcl}
\ds \ee\left(\sup_{0\leq t \leq T}|I^{M}_t-I_t|^{2p}\right)
  &\leq&\ds C_0\left(\left(\sum_{j=1}^{M}
w_{j}^2\right)^{\!\!p} + \left(\sum_{j=1}^M
  w_j|\beta_j-\beta|\right)^{2p}+ \left(\sum_{j=1}^M
w_j|\delta_j-\delta|\right)^{2p} \right)\\[5mm]
  &&\ds+ C_1 \int_{0}^{T}\ee\left(\sup_{0 \leq u
\leq
    s}|I^{M}_{u}-I_{u}|^{2}\right) ds\end{array}\]
with $C_0=8^{2p-1} T^p (T^p+K_pK_b^{2p})C_p$ and
$C_1=4^{2p-1}(2^{2p-1}K_pT^{p-1}(\beta
  K_\sigma)^{2p}+(2T)^{2p-1}
\delta^{2p}+r^{2p} \,T^{2p-1}).$

Finally, by means of Gronwall's lemma, we conclude that
\[\ee\left(\sup_{0\leq t \leq T}|I^{M}_t-I_t|^{2p}\right) \leq C_T \left(\left(\sum_{j=1}^{M}
w_{j}^2\right)^{\!\!p} + \left(\sum_{j=1}^M
  w_j|\beta_j-\beta|\right)^{2p}+ \left(\sum_{j=1}^M
w_j|\delta_j-\delta|\right)^{2p} \right)\] where
\[C_T=C_0 e^{C_1T}.\]
\end{proofOFT1}

\begin{proofOFT2}
  The proof is similar to the previous one :
  \[\begin{array}{rcl}
    \ds |S^{j,M}_{t}-S^j_t|^{2p}& \leq &\ds 3^{2p-1}
  \left((r-\delta_{j})^{2p}t^{2p-1}\int_{0}^{t}(S^{j,M}_{s}-S^j_s)^{2p}ds+\left|\int_{0}^{t}(S^{j,M}_{s}\eta_{j}(s,S^{j,M}_{s})-S^{j}_{s}\eta_{j}(s,S^{j}_{s}))dW^{j}_{s}\right|^{2p}\right.\\[2mm]
    &&\ds\left.+\beta_{j}^{2p}\left|\int_{0}^{t}(S^{j,M}_{s}\sigma(s,I^{M}_{s})-S^{j}_{s}\sigma(s,I_{s}))dB_{s}\right|^{2p}\right)
  \end{array}\]
  hence, using the Burkholder-Davis-Gundy inequality, there exists a
  constant $K_p$ such that
 \[\begin{array}{rcl}
\ds   \ee\left(\sup_{0\leq t \leq T}|S^{j,M}_{t}-S^j_t|^{2p}\right)&
\leq&\ds 3^{2p-1}
  \left((r-\delta_{j})^{2p}T^{2p-1}\int_{0}^{T}\ee\left(\sup_{0 \leq u \leq
        s}|S^{j,M}_{u}-S^j_u|^{2}\right)ds\right.\\[2mm]
&&\ds
+K_pT^{p-1}\int_{0}^{T}\ee\left((S^{j,M}_{s}\eta_{j}(s,S^{j,M}_{s})-S^{j}_{s}\eta_{j}(s,S^{j}_{s}))^{2p}\right)ds\\[2mm]
&&\ds
  \left.+ K_pT^{p-1}\beta_{j}^{2p}\int_{0}^{T}\ee\left((S^{j,M}_{s}\sigma(s,I^{M}_{s})-S^{j}_{s}\sigma(s,I_{s}))^{2p}\right)ds\right)
  \end{array}\]
Using assumption ($\cH$\ref{hyp:xetaLip}), one gets
\[\int_{0}^{T}\ee\left((S^{j,M}_{s}\eta_{j}(s,S^{j,M}_{s})-S^{j}_{s}\eta_{j}(s,S^{j}_{s}))^{2p}\right)ds
\leq K_{\eta}^{2p} \int_{0}^{T}\ee\left(\sup_{0 \leq u \leq
        s}|S^{j,M}_{u}-S^j_u|^{2p}\right)ds.\]
Finally, by means of lemma \ref{lem:moments} and assumptions
($\cH$\ref{hyp:bornitude}) and ($\cH$\ref{hyp:xsLip}),
\[\begin{array}{rcl}
  \ds
  \int_{0}^{T}\ee\left((S^{j,M}_{s}\sigma(s,I^{M}_{s})-S^{j}_{s}\sigma(s,I_{s}))^{2p}\right)ds&\leq& \ds 2^{2p-1}
  \int_{0}^{T}\ee\left((S^{j,M}_{s})^{2p}(\sigma(s,I^{M}_{s})-\sigma(s,I_{s}))^{2p}\right)ds.\\[3mm]
  &&\ds +2^{2p-1}
  \int_{0}^{T} \ee\left((\sigma(s,I_{s}))^{2p}(S^{j,M}_{s}-S^{j}_{s})^{2p}\right)ds\\[5mm]
&\leq& \ds 2^{2p-1} C_{2p}^{\frac{1}{2}} K_{Lip}^{2p}T \sqrt{\ee\left(\sup_{0\leq t \leq T}|I^{M}_t-I_t|^{4p}\right)}\\[3mm]
  &&\ds +2^{2p-1} K_b^{2p} \int_0^T \ee\left(\sup_{0\leq t \leq T}|S^{j,M}_{s}-S^{j}_{s}|^{2p}\right) ds\\[5mm]
\end{array}\]
We deduce using Gronwall's lemma :
\[\ee\left(\sup_{0\leq t \leq T}|S^{j,M}_t-S^j_t|^{2p}\right) \leq
\widetilde{C}^j_T \sqrt{\ee\left(\sup_{0\leq t \leq
T}|I^{M}_t-I_t|^{4p}\right)}
\] where
\[\widetilde{C}^j_{T}=6^{2p-1}K_pT^p\beta_{j}^{2p}C_{2p}^{\frac{1}{2}} K_{Lip}^{2p} \,\, e^{3^{2p-1}((r-\delta_{j})^{2p}T^{2p-1}+K_pT^{p-1}K_\eta^{2p}+
2^{2p-1}K_pT^{p-1}\beta_{j}^{2p}K_b^{2p})T}.\] We conclude by
Theorem \ref{convind} and the sublinearity of the square root
function on ${\mathbb R}_+$.

$\,$\\
We now turn to the $L^{2p}$-distance between $I^M$ and
$\overline{I}^M$ :
\[\begin{array}{rcl}
\ds |I^M_t - \overline{I}^M_t|^{2p} & = &\ds \left|\sum_{j=1}^M w_j
S^{j,M}_t-\sum_{j=1}^M w_j S^{j}_t\right|^{2p}\\[3mm]
&\leq& \ds \left(\sum_{j=1}^M w_j |S^{j,M}_t-S^{j}_t|\right)^{2p}\\[3mm]
&\leq& \ds \sum_{j_1=1}^M \dots \sum_{j_{2p}=1}^M \prod_{k=1}^{2p}
w_{j_k} |S^{j_k,M}_t-S^{j_k}_t|\\
\end{array}\]
So, using Hölder inequality, one has
\[\begin{array}{rcl}
\ds \ee\left(\sup_{0\leq t\leq T} |I^M_t -
\overline{I}^M_t|^{2p}|\right) & \leq&\ds \sum_{j_1=1}^M \dots
\sum_{j_{2p}=1}^M \left(\prod_{k=1}^{2p} w_{j_k}\right)
\prod_{k=1}^{2p} \left(\ee(\sup_{0\leq t\leq
T}|S^{j_k,M}_t-S^{j_k}_t|^{2p})\right)^{\frac{1}{2p}}\\[5mm]
&\leq &\ds \left(\sum_{j=1}^{M} w_{j}\right)^{2p} \max_{1\leq j \leq
M}\widetilde{C}^j_T \left(\left(\sum_{j=1}^{M}
w_{j}^2\right)^{\!\!p} + \right.\\[5mm]
&& \ds \quad \left.\left(\sum_{j=1}^M
  w_j|\beta_j-\beta|\right)^{2p}+ \left(\sum_{j=1}^M
w_j|\delta_j-\delta|\right)^{2p} \right).
\end{array}\]

\end{proofOFT2}

\end{document}